\newtheorem{theorem}{Theorem}
\newcommand{\tabincell}[2]{\begin{tabular}{@{}#1@{}}#2\end{tabular}}%
 \newtheorem{lemma}{Lemma}
  \newtheorem{proposition}{Proposition}
  \newtheorem{remark}{Remark}
  \newcommand{\figref}[1]{\figurename~\ref{#1}}
\DeclareMathOperator*{\argmax}{arg\,max}
\DeclareMathOperator{\diag}{diag}
\begin{document}
\title{Joint Beam Training and Positioning for Intelligent Reflecting Surfaces Assisted Millimeter Wave Communications}
\author{Wei~Wang,~\IEEEmembership{Member,~IEEE}, and~Wei~Zhang,~\IEEEmembership{Fellow,~IEEE}
\thanks{
W. Wang and W. Zhang are with the School of Electrical Engineering and Telecommunications, The University of New South Wales, Sydney, Australia (e-mail: wei.wang@unsw.edu.au; wzhang@ee.unsw.edu.au).}
}

\maketitle

\begin{abstract}
Intelligent reflecting surface (IRS) offers a cost-effective solution to link blockage problem in mmWave communications, and the prerequisite of which is the  accurate estimation of (1) the optimal beams for base station/access point (BS/AP) and mobile terminal (MT), (2) the optimal reflection patterns for IRSs, and (3) link blockage.
In this paper, we carry out beam training designs for IRSs assisted mmWave communications to estimate the aforementioned parameters.  To acquire the optimal beams and reflection patterns, we firstly perform random beamforming and maximum likelihood estimation to estimate angle of arrival (AoA) and angle of departure (AoD) of the line of sight (LoS) path between BS/AP (or IRSs) and MT. Then, with the estimated AoDs, we propose an iterative positioning algorithm that achieves centimeter-level positioning accuracy. The obtained location information is not only a fringe benefit but also enables us to  cross verify and enhance the estimation of AoA and AoD, and  it also facilitates the estimation of blockage indicator. Numerical results show the superiority of our proposed beam training scheme and verify the performance gain brought by location information.
\end{abstract}

%\begin{keywords}
%Intelligent reflecting surface, mmWave communications, beam training, positioning, blockage
%\end{keywords}

\maketitle

\section{Introduction}
Millimeter-wave (mmWave) band, ranging from 30GHz to 300GHz, has attracted great interests from both academia and industry for its abundant spectrum resources \cite{Mmwave0,Mmwave2}. The Wi-Fi standard IEEE 802.11ad runs on the 60GHz (V band) spectrum with data transfer rates of up to 7 Gbit/s \cite{nitsche2014ieee, sulyman2014radio}. In 3GPP Release 15,  24.25-29.5GHz and 37-43.5GHz, as the most promising frequencies for the early deployment of 5G millimeter wave systems, are specified based on a time-division duplexing (TDD) access scheme \cite{3GPP1}. The millimeter scale wavelength, on one hand, renders massive antennas integratable on an antenna array with portable size \cite{AIP1}, and, on the other hand, results in severe free space path loss especially for non-line-of-sight (NLoS) paths. Directional transmission enabled by beamforming techniques is an energy efficient transmission solution to compensate for the path loss in mmWave communications \cite{raghavan2016directional}. By properly adjusting the phase shifts of each antenna elements, it concentrates the emitted energy in a narrow beam between transmitter and receiver. However, the directional link is easily blocked by obstacles like human bodies, walls, and furniture, attributed to the millimeter scale wavelength \cite{bai2014coverage}. Once LoS path is blocked, it is highly possible that the blocked link cannot be restored no matter how the beam direction is adjusted, as the NLoS paths are not strong enough to serve as a qualified alternative link. Channel measurement campaigns reveal that power of the LoS component is about 13dB higher than the sum of power of NLoS components \cite{muhi2010modelling}. Therefore, blockage is the biggest hindrance to the large scale applications of mmWave band in mobile communication systems.

Recently, intelligent reflecting surface (IRS) \cite{wang2019intelligent, yang2019intelligent,IRSQU,wang2019channel}, a.k.a. reconfigurable intelligent surface (RIS) \cite{basar2019wireless,huang2019reconfigurable}, large intelligent surface (LIS) \cite{taha2019enabling}, passive (intelligent) reflectors/mirrors \cite{huang2018achievable,khawaja2019coverage,tan2018enabling}, or programmable metasurface \cite{tang2019wireless,zhao2018programmable,tang2019programmable}, is proposed as an energy-effective and cost-effective hardware structure for future wireless communications. IRS is essentially a new type of electromagnetic surface structure which is typically designed by deliberately arranging a set of sophisticated passive scatterers or apertures in a regular array to achieve the desired ability for guiding and controlling the flow of electromagnetic waves \cite{holloway2012overview}. Current applications of IRS to wireless communications can be categorized into two types, namely IRS modulator and IRS ``relay".  In \cite{tang2019wireless,zhao2018programmable,tang2019programmable}, amplitude/phase modulations over IRS are investigated. Through controlling the reflection coefficient of IRS, the incident carrier wave from a feed antenna can be digitally modulated without requiring high-performance radio frequency (RF) chains. A more extensive application of IRS is IRS ``relay", in which the radiated power from BS/AP towards IRS is reflected to MT via intelligently managing the phase shifters on IRS \cite{basar2019wireless,wang2019channel,huang2019reconfigurable,wang2019intelligent, yang2019intelligent,IRSQU,taha2019enabling,tan2018enabling,huang2018achievable,khawaja2019coverage}. It is noteworthy that the rationale behind IRS ``relay" and conventional amplify-and-forward (AF) relay is significantly different. AF relay firstly receives signal and then re-generates and re-transmits signal. In contrast, IRS only reflects the ambient RF signals as a passive array and bypasses conventional RF modules such as power amplifier, filters, and ADC/DAC \cite{yang2019intelligent}. Hence, IRS ``relay" incurs no additional power consumption and is free from thermal noise introduced by RF modules. In this sense, IRS can be regarded as a smart ``mirror" that enables us to change the paradigm of wireless communications from adjusting to wireless channel to changing wireless channel \cite{basar2019wireless,yang2016programmable}. As an active way to make wireless channel better, IRS ``relay" assisted wireless communications have attracted great interests from researchers.  In \cite{wang2019intelligent}, IRS is applied to mmWave communications to provide effective reflected paths and thus enhance signal coverage.  In \cite{IRSQU,huang2019reconfigurable,huang2018achievable}, joint optimization of the transmit beamforming by active antenna array at the BS/AP and reflect beamforming by passive phase shifters at the IRS is carried out.  In \cite{khawaja2019coverage}, empirical studies are performed to analyze the capability of signal coverage enhancement for IRSs assisted mmWave MIMO at 28GHz.  In \cite{tan2018enabling},  the reconfigurable 60GHz IRS is designed, implemented and deployed to strengthen mmWave connections for indoor networks threatened by blockage. The objective of the work is to validate IRS's capability to address link blockage problem in mmWave communications, and beam training design is not investigated.
Although extensive analytical and empirical studies have been done on IRSs assisted wireless communications in the aforementioned literature, these work either assume the availability of channel state information (CSI) or accurate measurement of BS/AP, MT and IRS's position and direction.

In \cite{yang2019intelligent}, a practical transmission protocol and channel estimation are firstly proposed for an IRS-assisted orthogonal frequency division multiplexing (OFDM) system under frequency-selective channels. In \cite{wang2019channel}, by exploiting the channel correlation among different users, a channel estimation scheme with reduced training overhead is proposed. Specifically, with a typical user's reflection channel vector, estimation of the other users' reflection channel vector can be simplified as the estimation of a multiplicative coefficient.  However, the aforementioned designs were performed in non-mmWave frequency band, and the direct application of them to mmWave communications will fail to utilize the sparse nature of mmWave channel. In \cite{taha2019enabling}, to facilitate channel estimation of IRSs assisted link over mmWave band or LoS dominated sub-6GHz band, an upgrade of IRS's structure is proposed to add a small number of channel sensors to sense and process incident signal. Although \cite{taha2019enabling} is intended to mmWave band, the proposed compressive sensing and deep learning algorithms are incompatible to current structure of IRS which is without channel sensors.
{ In \cite{wang2020compressed}, cascade channel estimation of the BS/AP-IRS-MT link in mmWave band is firstly converted into a sparse signal recovery
problem and then solved via conventional compressed sensing methods. However, \cite{wang2020compressed} is based on a strong assumption that AoA and AoD parameters lie on the discretized grid. In \cite{HighResolution}, a two-step  channel estimation protocol is proposed for the cascaded BS/AP-IRS-MT link in mmWave band, which includes hierarchical beamforming  and high resolution sparse channel estimation. As the selection of fine beam set in hierarchical beamforming is fully dependent on the training results of the wide beams in the previous layer, hierarchical beamforming requires interactions between BS/AP and MT. Thus, the extension of the proposed scheme from single user scenario to multi-user scenario might be costly in training overhead. Besides, as IRS is primarily used in mmWave communications to combat blockage, estimation of blockage in both BS/AP-MT link and BS/AP-IRS-MT link is essential for IRSs assisted mmWave communications, while  \cite{taha2019enabling, wang2020compressed, HighResolution} all neglect blockage effects in their designs.
}

Due to the deployment of multiple IRSs, beam training of IRSs assisted mmWave communications requires much heavier training overhead than traditional mmWave communications. Also, as the purpose of IRSs is to combat blockage and expand coverage, an accurate estimation of  blockage is essential to beam selection by BS/AP. In addition, the lack of RF chains results in the inability of IRSs to sense signal, which further complicates beam training for the paths assisted by IRSs. These  three features jointly render traditional beam training methods \cite{xiao2016hierarchical,Mmwave5} incompetent in IRSs assisted mmWave communications.
Despite the aforementioned new challenges of integrating IRSs to mmWave communications, a notable advantage is that the estimation of path parameters, e.g., AoA/AoD and blockage indicator, can be cross verified, thanks to the relatively large number of deployed IRSs. Specifically, three accurate estimates of AoA/AoD, associated with other essential information, e.g., direction of arrays,  can yield the location of MT, and the location of MT will in turn reproduce the path parameters. In this way, the path parameters of IRSs assisted mmWave MIMO can be  enhanced according to their geometric relationship. To estimate the channel parameters of IRSs assisted mmWave communications, we have made the following contributions in this paper:
\begin{itemize}
\item We propose a flexible beam training method for IRSs assisted mmWave MIMO by breaking it down into several mathematically equivalent sub-problems, and we further perform random beamforming and maximum likelihood (ML) estimation to jointly estimate AoA and AoD of the dominant path in each sub-problem. The proposed scheme does not require feedback from MT at training stage, and thus can be performed in a broadcasting manner. Hence, the required training overhead will not increase over MT number.
\item We prove the uniqueness of the AoA and AoD estimated by beam training with random beamforming. We further study the impact of training length, and we prove that larger training length almost surely results in smaller pairwise error probability of AoA, AoD pair.
\item By sorting the reliability of the estimated AoA, AoD pairs, we propose an iterative positioning algorithm to estimate the location of MT, and, through numerical analysis, we show that the algorithm achieves centimeter-level positioning accuracy.
\item With the estimated position of MT, we propose to cross verify and enhance the estimation of path parameters, i.e., AoA and AoD, according to their geometric relationship. We further propose an accurate method of blockage estimation by comparing the ML estimate of pathloss and MT position based estimate of pathloss.
\end{itemize}
Numerical results show the superiority of our proposed beam training scheme and verify the performance gain brought by location information.

The rest of the paper is organized as follows. Section II introduces the system model. In Section III, we break down the beam training design of IRSs assisted mmWave communications. In Section IV, we propose beam training with random beamforming, and specifically we estimate path parameters and study the feasibility of the scheme. In Section V, we study the interplay between  positioning and beam training. In Section VI, numerical results are presented. Finally, in Section VII, we draw the conclusion.

{\em{Notations:\quad}} Column vectors (matrices) are denoted by bold-face lower (upper) case letters, $\mathbf{x}(n)$ denotes the $n$-th element in the vector $\mathbf{x}$, $(\cdot)^*$, $(\cdot)^T$ and $(\cdot)^{H}$  represent conjugate, transpose and  conjugate transpose operation, respectively,  $||\cdot||$ denotes the Frobenius norm of a vector or a matrix,  $\odot$ is Hadamard product. Subtraction and addition of the cosine AoAs/AoDs are defined as $\theta \ominus \phi  \triangleq (\theta - \phi+1) \mod 2 -1$ and $\theta \oplus \phi  \triangleq (\theta + \phi+1) \mod 2 -1$ to guarantee the result is within the range $[-1,1)$.

%SAY STH ABOUT INDOOR POSITIONING. XXXXXXXXXXXXXXX

\section{System Model}

Consider a communication link between the BS/AP and an MT operating in mmWave band, where both ends adopt uniform linear array (ULA) antenna structure. To reduce wireless link blockage rate and thus guarantee the reliable linkage between BS and MT, a number of IRSs are deployed in the cell as shown in \figref{RIS1}, and BS/AP is able to control IRSs via cable or lower frequency radio link.

\begin{figure}[tp]{
\begin{center}{\includegraphics[ height=3.6cm]{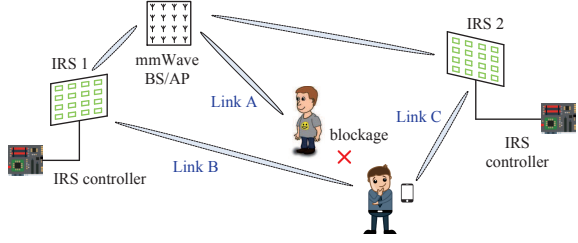}}
\caption{Illustration of IRSs assisted mmWave communications}\label{RIS1}
\end{center}}
\end{figure}

The channel response between BS/AP and MT without the assistance of IRSs is represented as \cite{Mmwave5}
\begin{align}
 \mathbf{H}_{BM} =  &\zeta_{LoS}  \delta_{1} \mathbf{a}_{M}(\theta_{BM,1}) \mathbf{a}^H_{B}(\phi_{BM,1})  + \notag \\
 &\sum_{l=2}^{L}\delta_{l} \mathbf{a}_{M}(\theta_{BM,l}) \mathbf{a}^H_{B}(\phi_{BM,l}) \label{Channel}
\end{align}
where $\zeta_{LoS}\in \{0, 1 \}$ is the indicator of blockage of the LoS path, and $\delta_{l}$, $\theta_{BM,l}$ and $\phi_{BM,l}$ are channel gain, cosine of AoA, and cosine of  AoD  of the $l$-th path, respectively. The parameters $(\zeta_{LoS}, \delta_{1}, \theta_{BM, 1}, \phi_{BM,1})$ characterize LoS path, which are of particular interest to us in mmWave communications. According to \cite{basar2019wireless}, the path gain of LoS is $\delta_{1} =  \frac{\lambda e^{-j 2\pi d_{BM}}}{4\pi d_{BM}}$,
where $\lambda$ is the wavelength, and $d_{BM}$ is the distance between BS and MT.
Further, the steering vectors are given by
\begin{align}
\mathbf{a}_{M}(\theta_{BM, l})  =  [1,\; e^{j \pi 1 \theta_{BM, l}},\;\cdots,e^{j \pi (N_M-1) \theta_{BM, l}}]^T \notag \\
\mathbf{a}_{B}(\phi_{BM, l}) =  [1,\; e^{j \pi 1 \phi_{BM, l}},\;\cdots,e^{j \pi (N_B-1) \phi_{BM, l}}]^T \notag
\end{align}
where $N_B$ is the number of antennas of BS/AP, $N_M$ is the number of antennas of MT.

We also assume that IRSs adopt ULA antenna structure. Thus, the channel response of the reflected path from BS to MT assisted (reflected) by  the  $i$-th IRS is
\begin{align}
  \mathbf{H}_{BR_iM}=& \zeta_{VLoS,i} \bar \delta_{B R_i M}  \mathbf{a}_{M}(\theta_{R_iM}) \mathbf{a}_{R_i}^H(\phi_{R_iM}) \notag \\
    & \diag \{\bar{\mathbf{g}}_i\} \mathbf{a}_{R_i}(\theta_{BR_i}) \mathbf{a}_{B}^H(\phi_{BR_i})  \notag \\
  = &\zeta_{VLoS,i} \delta_{B R_i M} (\bar{\mathbf{g}}_i)  \mathbf{a}_{M}(\theta_{R_iM}) \mathbf{a}_{B}^H(\phi_{BR_i})
\end{align}
{ where $\bar{\mathbf{g}}_i$ is the reflection vector that determines the reflection pattern of the $i$-th IRS}, $\zeta_{VLoS,i} \in \{0, 1 \} $ is the indicator of blockage of the  path reflected by the $i$-th IRS and $\bar \delta_{BR_iM} = \frac{\sqrt{\xi} \lambda e^{-j 2\pi (d_{BR_i} + d_{R_i M}) }}{4\pi(d_{BR_i} + d_{R_i M})}$ \cite{basar2019wireless}, in which
$\xi$ is reflection loss, $d_{BR_i}$ is the distance between BS and the $i$-th IRS, $d_{R_i M}$ is the distance between the $i$-th IRS and MT. The equivalent path gain of the IRS reflected path can be written as
\begin{align} \label{PathGain2}
\delta_{B R_i M} (\bar{\mathbf{g}}_i)  & \triangleq \bar \delta_{B R_i M}  \mathbf{a}_{R_i}^H(\phi_{R_iM})\diag \{\bar{\mathbf{g}}_i\}   \mathbf{a}_{R_i}(\theta_{BR_i})  \notag \\
& =  \bar \delta_{B R_i M}\mathbf{a}_{R_i}^H(\phi_{R_iM} \ominus \theta_{BR_i}) \bar{\mathbf{g}}_i
\end{align}
The steering vector $\mathbf{a}_{R_i}(\phi_{R_iM}) $ is given by
\begin{align}
\mathbf{a}_{R_i}(\phi_{R_iM})  =  [1,\; e^{j \pi 1 \phi_{R_iM} },\;\cdots,e^{j \pi (N_{R_i} -1) \phi_{R_iM}}]^T
\end{align}
where $N_{R_i}$ is the number of passive reflectors of the $i$-th IRS. { Based on \eqref{PathGain2}, the optimal reflection coefficient vector that maximizes effective received power is $\bar{\mathbf{g}}_i^\star = \mathbf{a}_{R_i}(\phi_{R_iM} \ominus \theta_{BR_i})$}.

Hence, the channel response between BS and MT with the assistance of $N_{IRS}$ IRSs is represented as
\begin{align}
 &\mathbf{H}   =  \mathbf{H}_{BM} +  \sum_{i=1}^{N_{IRS}} \gamma_i \mathbf{H}_{BR_iM} =\notag \\
   &\underbrace{\zeta_{LoS}\delta_{1} \mathbf{a}_{M}(\theta_{MB,1}) \mathbf{a}^H_{B}(\phi_{MB,1})}_{LoS \; component} + \underbrace{\sum_{l=2}^{L}\delta_{l} \mathbf{a}_{M}(\theta_{MB,l}) \mathbf{a}^H_{B}(\phi_{MB,l})}_{NLoS \; component}  \notag \\
 &\;\; + \underbrace{\sum_{i=1}^{N_{IRS}} \gamma_i  \zeta_{VLoS,i}\delta_{B R_i M} (\bar{\mathbf{g}}_i)  \mathbf{a}_{M}(\theta_{R_iM}) \mathbf{a}_{B}^H(\phi_{BR_i})}_{VLoS \; component} \label{ChannelResponse}
\end{align}
where
\begin{align}
\gamma_i = \left\{\begin{array}{cc}
             1, & {\rm when\;the\;} i{\rm th\; IRS\; is\; activated}  \\
             0, & {\;\;\;\;\rm when\; the\;} i{\rm th\; IRS\; is\; deactivated}
           \end{array} \right.
   \notag
\end{align}
indicates the activation status of the $i$-th IRS and $\gamma_i$ can be configured by BS/AP.

When the reflection pattern of the vector $\bar{\mathbf{g}}_i$ is omnidirectional, IRS works as a scatterer that diffuses the energy radiated from BS. When $\bar{\mathbf{g}}_i^\star = \mathbf{a}_R(\phi_{R_iM} \ominus \theta_{BR_i})$, IRS works as a ``mirror'' that builds a virtual LoS (VLoS) path between BS and MT, and thus the energy from BS will be concentrated on MT, and $\phi_{R_iM} \ominus \theta_{  BR_i}$ is termed as \emph{the optimal reflection angle} of the $i$-th VLoS path. We can categorize channel components of ${\mathbf{H}} $ into three types as  in Eq. \eqref{ChannelResponse}, namely \emph{LoS path component}, \emph{VLoS path component}, and \emph{NLoS path component}. LoS path component is the direct path between BS and MT,  VLoS path component consists of the paths between BS and MT reflected by IRSs, and NLoS path component consists of  the paths between BS and MT reflected by scatters, e.g., walls, human bodies, and etc.

As NLoS path component usually varies fast and its weight to the channel is marginal especially in mmWave band, we are more interested in LoS path and VLoS paths. Hence, we intend to estimate (1) the optimal reflection angle $\phi_{R_iM} \ominus \theta_{BR_i}$ of IRSs and (2) the path parameters $(\zeta_{BM,1}, \delta_{BM,1}, \theta_{BM,1}, \phi_{BM,1})$ of the LoS path and $(\zeta_{BR_iM}, \delta_{BR_iM}({\bar{\mathbf{g}}}^\star_i), \theta_{R_iM},\phi_{BR_i})$ of the VLoS paths through beam training and location information aided parameter enhancement.

\section{Framework of Joint Beam Training and Positioning}

For conventional mmWave communications,  training overhead  can be significantly reduced by exploiting the sparse nature of mmWave channel \cite{Mmwave5,OTHOR}. However, with the assistance of IRSs, the sparse channel of mmWave band is artificially converted into rich scattering channel. The increased scattering effect, together with the unknown optimal reflection angle, jointly complicates the process of beam training. To make the over-complicated problem tractable, we propose to { break} down beam training of IRSs assisted mmWave MIMO into two sub-problems, and we further show that the two sub-problems are mathematically equivalent. { Then,  we propose a protocol for joint beam training and positioning which well accommodates multi-user scenario.}

\subsection{Breakdown of Beam Training for IRSs Assisted MmWave MIMO}

\begin{figure}[tp]{
\begin{center}{\includegraphics[ height=4.5cm]{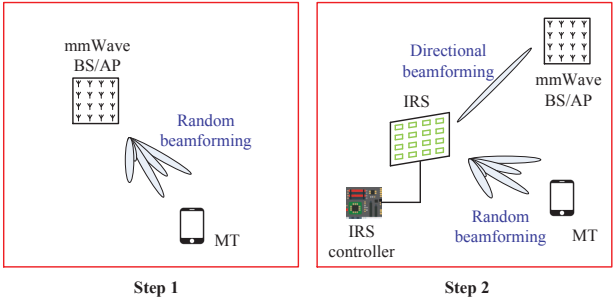}}
\caption{Two steps of  beam training with random beamforming in IRSs assisted mmWave communications}\label{RIS3}
\end{center}}
\end{figure}

At first, it is noteworthy that AoA/AoD of the LoS path between IRSs and BS/AP can be accurately pre-measured, since both IRSs and BS/AP are pre-configured. Thus,  $\theta_{B R_i}$ and $\phi_{B R_i}$ are used as prior knowledge hereafter. Then, beam training of IRSs assisted mmWave MIMO is carried out in the following two steps  as  illustrated in \figref{RIS3}.

 {\textbf{Step 1.} \underline{De-activate all the IRSs, and estimate the parameters} \underline{$(\delta_{BM,1}, \theta_{BM,1}, \phi_{BM,1})$ of LoS path}}

%\begin{align} \label{Uplink}
%y_n & = \mathbf{m}_n^H \mathbf{H}_{M,B} \mathbf{f}_n + \nu_n + \mathbf{m}_n^H \bar{\mathbf{w}} \notag \\
%& = \delta\mathbf{m}_n^H \mathbf{a}_{M}(\theta_1) \mathbf{a}^H_{B}(\phi_1)\mathbf{f}_n + \nu_n  + \mathbf{m}_n^H \bar{\mathbf{w}} \notag \\
%& = \delta (\mathbf{f}_n^T \otimes \mathbf{m}_n^H) vec( \mathbf{a}_{B}(\theta_1) \mathbf{a}^H_{M}(\phi_1)) + \nu_n + \mathbf{m}_n^H \bar{\mathbf{w}} \notag \\
%& = \delta (\mathbf{f}_n^T \otimes \mathbf{m}_n^H) \mathbf{b} (\theta_1, \phi_1) + \nu_n+ \mathbf{m}_n^H\bar{\mathbf{w}} \notag \\
%\end{align}

To estimate the parameters, measures of  channel are collected via Tx/Rx random beamforming in BS/AP side and MT side, i.e.,
\begin{align} \label{MeasureBS}
 y = &\sqrt{P_{Tx}}\mathbf{m}^H   \mathbf{H}_{BM} \mathbf{f}s  + \mathbf{m}^H\bar{\mathbf{w}} \notag \\
=  & \sqrt{P_{Tx}} \zeta_{LoS} \delta_{BM,1}  \mathbf{m}^H\mathbf{a}_{M}(\theta_{BM,1}) \mathbf{a}_{B}^H(\phi_{BM,1})\mathbf{f} + \notag \\
 & \underbrace{\sum_{l=2}^{L}\sqrt{P_{Tx}}\delta_{BM,l} \mathbf{m}^H\mathbf{a}_{M}(\theta_{BM, l}) \mathbf{a}_{B}^H(\phi_{BM,l})\mathbf{f}}_{\nu} + \mathbf{m}^H\bar{\mathbf{w}}
\end{align}
where $P_{Tx}$ is transmit power, $ \bar{\mathbf{w}} \sim {\cal {CN}}(\mathbf{0}, \sigma^2_{\bar{\mathbf{w}}} \mathbf{I}_{N_M})$ is the zero-mean complex Gaussian additive noise, $s=1$ is the pilot signal sent by the user,
$ \mathbf{f} $ and $\mathbf{m}$  are  transmit random beamforming vector at BS/AP side and receive random beamforming vector at MT side\footnotemark, respectively, and
the entries of $ \mathbf{f} $ and $\mathbf{m}$ are phase-only complex variables with invariable amplitude \cite{Mmwave6}, i.e.,
\begin{align}
  \mathbf{f} = \frac{1}{\sqrt{N_B}} \left(e^{j\pi \varrho_{1}}, e^{j\pi \varrho_{2}}, \cdots, e^{j\pi \varrho_{N_B}} \right)^T \notag \\
  \mathbf{m} = \frac{1}{\sqrt{N_M}} \left(e^{j\pi \sigma_{1}}, e^{j\pi \sigma_{2}}, \cdots, e^{j\pi \sigma_{N_M}\emph{}} \right)^T \notag
\end{align}
 $\varrho_{n_B}$ is the phase shift value of the $n_B$-th analog phase shifter in BS/AP side, $\sigma_{n_M}$  is the phase shift value of the $n_M$-th analog phase shifter in MT side.

\footnotetext{A good random beamforming codebook can be derived offline by high performance computers, and they will be pre-configured in BS/AP, IRS and MT side.}

%\begin{figure*}[tp]{
%\begin{center}{\includegraphics[ height=5cm]{IRS4.eps}}
%\caption{Procedures of XX}\label{Ex4}
%\end{center}}
%\end{figure*}

As NLoS paths are much weaker than LoS path in mmWave band, i.e.,  $ \delta_{BM,l} (l=2,\cdots, L)$ are small compared to $\delta_{BM,1}$,  we are very less likely to build an effective communication link via NLoS paths. Hence, the AoA, AoD pair that we are interested in is merely $(\zeta_{LoS}, \delta_{BM,1}, \theta_{BM,1}, \phi_{BM,1})$, and the term $\nu$ will be treated as interference. Considering the small scale and randomness of $\delta_{BM,l} (l=2,\cdots, L)$, we assume that $\nu$ follows complex Gaussian distribution for the simplicity of analysis\footnotemark.  Then, the beam training problem for IRSs assisted mmWave MIMO communications is formulated as the estimation of  $(\zeta_{LoS}, \delta_{BM,1}, \theta_{BM,1}, \phi_{BM,1})$ from the following received signal
\begin{align}
 y =  \sqrt{P_{Tx}}{\zeta_{LoS} \delta_{BM,1}   \mathbf{m}^H \mathbf{a}_{M}(\theta_{BM,1}) \mathbf{a}^H_{B}(\phi_{BM,1})  \mathbf{f}} + \nu+  \mathbf{m}^H \bar{\mathbf{w}}
\end{align}
Adding the subscript $n$ to $y$ to denote the received signal in the $n$-th time slot, we have
\begin{align}
      y_{n} & = \sqrt{P_{Tx}} \zeta_{LoS}\delta_{BM,1}\mathbf{m}_n^H \mathbf{a}_{M}(\theta_{BM,1}) \mathbf{a}^H_{B}(\phi_{BM,1})\mathbf{f}_n \notag \\
   & \qquad  + \nu_n  + \mathbf{m}_n^H \bar{\mathbf{w}}_n \notag \\
& =  \sqrt{P_{Tx}}\zeta_{LoS}\delta_{BM,1} (\mathbf{f}_n^T \otimes \mathbf{m}_n^H) \mathbf{b} (\theta_{BM,1}, \phi_{BM,1}) \notag \\
&\qquad + \nu_n+ \mathbf{m}_n^H\bar{\mathbf{w}}_n \notag
\end{align}
where $\mathbf{b} (\theta_{BM,1}, \phi_{BM,1}) \triangleq  vec( \mathbf{a}_{M}(\theta_{BM,1}) \mathbf{a}^H_{B}(\phi_{BM,1}))$.

\footnotetext{Although we assume that $\nu$ follows Gaussian distribution in theoretical analysis, the channel model to be applied in numerical simulations still considers NLoS components as in Eq. \eqref{Channel}. }

To estimate AoA and AoD, $N$ channel measurements are to be collected and concatenated, and its vector form is derived as
\begin{align}  \label{Sysmodel}
\mathbf{y} =  \sqrt{P_{Tx}} \zeta_{LoS}\delta_{BM,1} \mathbf{D} \mathbf{b} (\theta_{BM,1}, \phi_{BM,1}) + \underbrace{\boldsymbol{\nu}+ \mathbf{w}}_{\mathbf{n}}
\end{align}
where
\begin{align}
\mathbf{y} & = \left[y_{1},\; y_{2},\;\cdots, y_{N} \right]^T \notag\\
\mathbf{D} &= \left[\mathbf{f}_1 \otimes \mathbf{m}_1^*,\; \mathbf{f}_2  \otimes \mathbf{m}_2^*,\; \cdots, \;\mathbf{f}_N  \otimes \mathbf{m}_N^*\right]^T \notag\\
\boldsymbol{\nu} &= \left[ {\nu}_1,\;  {\nu}_2,\; \cdots, \;  {\nu}_N \right]^T  \sim   {\cal {CN}}(\mathbf{0}, \sigma^2_{\boldsymbol{\nu}} \mathbf{I}_{N}) \notag\\
\mathbf{w} &= \left[\mathbf{m}_1^H\bar{\mathbf{w}}_1,\; \mathbf{m}_2^H\bar{\mathbf{w}}_2,\; \cdots, \;\mathbf{m}_N^H \bar{\mathbf{w}}_N\right]^T \notag
\end{align}
Since
\begin{align}
&\mathbb{E}\left({\mathbf{w}}(\iota){\mathbf{w}}^*(\iota)\right) = \mathbb{E}\left(\mathbf{m}_{\iota}^H\bar{\mathbf{w}}_{\iota} \bar{\mathbf{w}}_{\iota}^H \mathbf{m}_{\iota} \right)  = \sigma^2_{\bar{\mathbf{w}}},  \notag \\
&\mathbb{E}\left({\mathbf{w}}(\iota){\mathbf{w}}^*(\kappa)\right) = \mathbb{E}\left(\mathbf{m}_{\iota}^H\bar{\mathbf{w}}_{\iota} \bar{\mathbf{w}}_{\kappa}^H \mathbf{m}_{\kappa} \right) = 0,\; \forall \iota\neq \kappa \notag
\end{align}
the covariance of the equivalent noise $\mathbf{w}$ is thus $
 \mathbb{E}(\mathbf{w}\mathbf{w}^H) = \sigma^2_{\bar{\mathbf{w}}} \mathbf{I}_{N}$.
Let $\mathbf{n}\triangleq \boldsymbol{\nu} + \mathbf{w}$, as $\boldsymbol{\nu}$ and $\mathbf{w}$ are independent of each other, we have $\mathbf{n} \sim {\cal {CN}}\left(\mathbf{0}, \left( \sigma^2_{\bar{\mathbf{w}}} + \sigma^2_{\boldsymbol{\nu}}\right)\mathbf{I}_{N}\right)$.

Based on the above analysis, beam training for the link between BS/AP and MT is summarized as follows.\\
\vspace{0.1cm}
\noindent \textbf{\emph{Sub-problem 1: } } How to accurately estimate the parameter set $(\zeta_{LoS}, \delta_{BM,1}, \theta_{BM,1}, \phi_{BM,1})$ from $\mathbf{y} $.
\vspace{0.1cm}

% To sum up, beam training of IRSs assisted mmWave communications primarily intends to estimate  AoA/AoD pair $(\theta_{BM,1}, \phi_{BM,1})$ of the LoS path between IRS and MT from the received sample vector $\mathbf{y}$ in Eq. \eqref{Sysmodel}, and based on the estimated $(\theta_{BM,1}, \phi_{BM,1})$, a reliable communication link can be built through beam alignment between IRS and MT.

\textbf{Step 2.} \underline{Activate the $i$-th IRS,  de-activate the  rest IRSs, and } \underline{estimate the parameters $(\delta_{BR_iM}, \theta_{R_iM}, \phi_{R_iM} \ominus \theta_{BR_i})$ of the} \underline{$i$-th VLoS path. Repeat the above process for the rest IRSs.}

As $\phi_{BR_i}$ is known, with the transmit beamforming vector $\mathbf{f} = \frac{\mathbf{a}_{B}(\phi_{BR_i})}{\sqrt{N_B}} $, BS/AP is able to concentrate its power towards IRSs via transmit beamforming. Simultaneously, IRS performs \emph{passive random reflection} and MT performs \emph{receive random beamforming}, the received signal at MT side is written as
\begin{align}
&\qquad  y\notag \\
& = \sqrt{P_{Tx}}\mathbf{m}^H \left( \mathbf{H}_{BM} +  \mathbf{H}_{BR_iM}  \right)  \frac{\mathbf{a}_{B}(\phi_{BR_i})}{\sqrt{N_B}}  + \mathbf{m}^H\bar{\mathbf{w}}\notag \\
%& =  \delta_{B R_i M} (\bar{\mathbf{g}}_{i})  \mathbf{m}^H \mathbf{a}_{M}(\theta_{R_iM}) \mathbf{a}_{B}^H(\phi_{BR_i}) \frac{\mathbf{a}_{B}(\phi_{BR_i})}{\sqrt{N_B}}     \notag \\
%& \quad + \sum_{l=1}^{L}\delta_{l} \mathbf{m}^H \mathbf{a}_{M}(\theta_{MB,l}) \mathbf{a}^H_{B}(\phi_{MB,l})\frac{\mathbf{a}_{B}(\phi_{BR_i})}{\sqrt{N_B}} + \mathbf{m}^H\bar{\mathbf{w}} \notag \\
%& =   \sqrt{N_B} \delta_{B R_i M} (\bar{\mathbf{g}}_{i})  \mathbf{m}^H \mathbf{a}_{M}(\theta_{R_iM})   \notag \\
%& \quad + \sum_{l=1}^{L}\delta_{l} \mathbf{m}^H \mathbf{a}_{M}(\theta_{MB,l}) \mathbf{a}^H_{B}(\phi_{MB,l})\frac{\mathbf{a}_{B}(\phi_{BR_i})}{\sqrt{N_B}}  + \mathbf{m}^H\bar{\mathbf{w}} \notag \\
& =   \sqrt{N_B P_{Tx}} \zeta_{VLoS,i}\delta_{B R_i M} \mathbf{m}^H \mathbf{a}_{M}(\theta_{R_iM}) \mathbf{a}_{R_i}^H(\phi_{R_iM}\ominus \theta_{BR_i}) \bar{\mathbf{g}}_{i}   \notag \\
& +  \underbrace{ \sqrt{P_{Tx}}\zeta_{LoS}\delta_{1} \mathbf{m}^H \mathbf{a}_{M}(\theta_{MB,1}) \mathbf{a}^H_{B}(\phi_{MB,1})\frac{\mathbf{a}_{B}(\phi_{BR_i})}{\sqrt{N_B}}}_{\nu_1}  \notag \\
&+\underbrace{\sum_{l=2}^{L}\sqrt{P_{Tx}}\delta_{l} \mathbf{m}^H \mathbf{a}_{M}(\theta_{MB,l}) \mathbf{a}^H_{B}(\phi_{MB,l})\frac{\mathbf{a}_{B}(\phi_{BR_i})}{\sqrt{N_B}}}_{\nu_2} + \underbrace{\mathbf{m}^H\bar{\mathbf{w}}}_{w}
\end{align}
The interference term $\nu_1$ and $\nu_2$ are insignificant due to 1) the small NLoS path coefficients $\delta_{l} (l = 2,\cdots, L) $ in mmWave band, 2) the spatial filtering impact, i.e., $\mathbf{a}^H_{B}(\theta_{MB, l})  \mathbf{a}_{B}(\phi_{BR_i}) \approx 0, (l = 1, 2,\cdots, L)$ for $|\phi_{BR_i} - \theta_{MB, l}| > \frac{1}{N_B}$.

Similar to \eqref{Sysmodel}, by concatenating $N$ channel measurements, we have
\begin{align}
 & \mathbf{y} = \sqrt{N_B P_{Tx}}\zeta_{VLoS,i} \delta_{B R_i M}  \mathbf{D} \mathbf{b} (\theta_{R_iM}, \phi_{R_iM} \ominus \theta_{BR_i}) \notag \\
 & \qquad + \underbrace{\boldsymbol{\nu}_1 + \boldsymbol{\nu}_2 + \mathbf{w}}_{\mathbf{n}}  \label{Sysmodel2}
\end{align}
where
\begin{align}
   &{\mathbf{D}}  = \left[ {\mathbf{g}}_1 \otimes \mathbf{m}_1^*,\;  {\mathbf{g}}_2  \otimes \mathbf{m}_2^*,\; \cdots, \; {\mathbf{g}}_N  \otimes \mathbf{m}_N^*\right]^T \notag
%\boldsymbol{\nu} &= \left[ {\nu}_1,\;  {\nu}_2,\; \cdots, \;  {\nu}_N \right]^T  \sim   {\cal {CN}}(\mathbf{0}, \sigma^2_{\boldsymbol{\nu}} \mathbf{I}_{N_M}) \notag\\
%\mathbf{w} &= \left[\mathbf{m}_1^H\bar{\mathbf{w}},\; \mathbf{m}_2^H\bar{\mathbf{w}},\; \cdots, \;\mathbf{m}_N^H \bar{\mathbf{w}}\right]^T \notag
\end{align}

Based on the above analysis, beam training for the reflected path between BS/AP and MT assisted by the $i$-th IRS  is summarized as follows.\\
\vspace{0.1cm}
\noindent \textbf{\emph{Sub-problem 2: } }
How to accurately estimate  the parameter set $(\zeta_{VLoS,i}, \delta_{B R_i M}, \theta_{R_iM}, \phi_{R_iM}\ominus\theta_{BR_i})$ from   $\mathbf{y}$.
\vspace{0.1cm}

\begin{remark}
From \eqref{Sysmodel} and \eqref{Sysmodel2}, we can easily find that Sub-problem 1 and Sub-problem 2 are mathematically equivalent. Therefore, through flexible control over IRS, we are capable to decompose the complicated non-sparse channel estimation problem of IRSs assisted mmWave MIMO into a set of simplified sub-problems.
\end{remark}

{ 
\subsection{Protocol of Joint Beam Training and Positioning}

\begin{figure}[tp]{
\begin{center}{\includegraphics[ height=8.2cm]{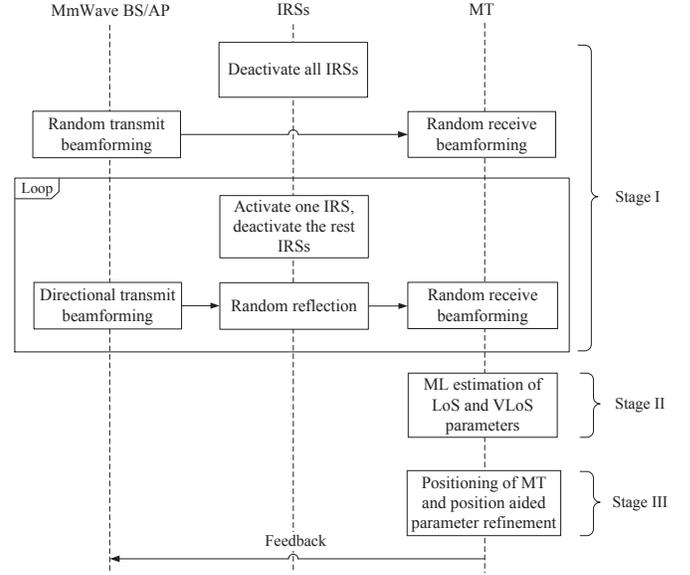}}
\caption{ Sequence diagram of joint beam training and positioning}\label{UML}
\end{center}}
\end{figure}
 }

On the basis of beam training breakdown, we introduce the protocol for joint beam training and position in the IRSs assisted mmWave communication system.

The procedures of the proposed scheme are given in \figref{UML}. Specifically, the scheme is divided into three  stages, i.e., Stage I. channel measurement, Stage II. parameter estimation and Stage III. positioning and location information aided parameter enhancement. In Stage I, the channel measurement vector $\mathbf{y}$ in Eq. \eqref{Sysmodel} and Eq. \eqref{Sysmodel2} are collected; In Stage II, ML estimation is performed to estimate the parameters of LoS path and VLoS paths in MT side, which will be introduced in Section IV; In Stage III, positioning and position aided path parameter refinement are performed in MT side, which will be introduced in Section V.

In practice, when an MT enters a cell, the prior information, e.g., random beamforming vector sequence of BS/AP, position of BS/AP and IRSs, will be sent to the MT via the lower frequency link, such as sub-6GHz link of 5G New Radio. Upon the request of high-speed mmWave data transmission, the random beamforming process in Stage I will be triggered periodically prior to data transmission to help setup initial beam alignment for new users and maintain beam alignment for the existing users. Then, in Stage II and Stage III,  each MT performs parameters estimation, positioning and position aided parameter refinement based on its own channel measurement vector $\mathbf{y}$. Finally, the estimated path parameters are fed back to BS/AP by each MT individually to facilitate beamforming designs for the subsequent mmWave data transmission.

It is noteworthy that, as random beamforming is quasi-omnidirectional \cite{myers2020deep}, the pilot sent by BS/AP can be received by MTs from all directions simultaneously. The broadcasting mechanism of random beamforming, which is similar to Global Positioning System (GPS), enables channel measurements to be collected and processed by each MT individually without causing interference. Therefore, training overhead of the proposed scheme will not increase with MT number, which renders the scheme particularly suitable for multi-user scenario.

\section{Beam Training With Random Beamforming -- Parameter Estimation and Feasibility Study }

In this section, ML estimation method is applied to estimate the path parameters $(\delta, \theta, \phi)$ of LoS/VLoS paths from channel measurements sampled by random Rx/Tx beamforming. Furthermore, the feasibility of random beamforming based beam training is verified.

\subsection{Maximum Log-likelihood Estimation of $(\delta, \theta, \phi)$}
For conciseness of expression, we write the unified model of sub-problem 1 and sub-problem 2 as
\begin{align} \label{UniModel}
 \mathbf{y} =  \zeta \delta \mathbf{D} \mathbf{b}(\theta,\phi) + {\mathbf{n}}
\end{align}
where  $\zeta$ is the indicator of blockage, $\delta$ is equivalent path gain ( $\delta = \sqrt{P_{Tx}}  \delta_{BM,1}$ or $\delta = \sqrt{P_{Tx} N_B} \delta_{B R_i M} $ ), $\theta$ is cosine AoA, $\phi$ is equivalent cosine AoD ($\phi = \phi_{BM,1}$ or $\phi = \phi_{R_iM}\ominus\theta_{BR_i}$), and
$\mathbf{b} (\theta, \phi) \triangleq  vec( \mathbf{a}_{Rx}(\theta) \mathbf{a}^H_{Tx}(\phi))$.

It is noteworthy that estimation of $(\delta, \theta, \phi)$ should be performed merely when $\zeta = 1$, as the measurement vector $\mathbf{y}$  given that $\zeta = 0$  contains no information about $(\delta, \theta, \phi)$. Therefore, we estimate the parameters $(\delta, \theta, \phi)$
through maximizing  log-likelihood function under the assumption that $\zeta = 1$, i.e.,
\begin{align} \label{MaxL}
(\hat{\delta}, \hat{\theta}, \hat{\phi}) = \argmax_{\delta, \theta, \phi} \mathcal{L}( \delta, \theta, \phi)
\end{align}
where
\begin{align} \label{LogLikeli}
\mathcal{L}( \delta, \theta, \phi)  = & \log P(\mathbf{y}|\zeta = 1, \delta, \theta, \phi) \notag \\
 =& -  N\log  {\pi} -  N\log \sigma^2 - \frac{\| \mathbf{y} -   \delta  \mathbf{D} \mathbf{b} (\theta, \phi) \|_2^2}{\sigma^2}
\end{align}
and the conditional probability is
\begin{align}
    P(\mathbf{y}| \zeta, \delta, \theta, \phi)  = \frac{1}{\pi^N \det(\sigma^2 \mathbf{I}_N)} e^{-\frac{(\mathbf{y} -  \zeta\delta  \mathbf{D} \mathbf{b} (\theta, \phi))^H  (\mathbf{y} -  \zeta\delta  \mathbf{D} \mathbf{b} (\theta, \phi))}{\sigma^{2}}   }
\end{align}

\subsubsection{Estimation of $\delta$}

Before the derivation of $\hat\theta, \hat\phi$, we should find the expression of $ \hat{\delta}$. To this end, we ignore terms independent thereof and set
\begin{align}
\frac{\partial \mathcal{L}(\delta, \theta, \phi)}{\partial \delta}  = 0 \label{Delta1}
\end{align}
Expanding Eq. \eqref{Delta1}, we have
\begin{align}
2Re\left\{\left(\mathbf{D}\mathbf{b}(\theta, \phi)\right)^H (\mathbf{y} -  \delta  \mathbf{D} \mathbf{b} (\theta, \phi))\right\} = 0 \label{Delta2}
\end{align}
From Eq. \eqref{Delta2},  the optimal $\hat{\delta}$ is derived as
\begin{align}
\hat{\delta}  =   \frac{\mathbf{b}^H(\theta, \phi)  \mathbf{D}^H  \mathbf{y}}{\| \mathbf{D} \mathbf{b}(\theta, \phi) \|_2^2} \label{OptDelta}
\end{align}

\subsubsection{Estimation of $\theta$ and $\phi$}
Next, we will jointly estimate  $\theta$ and $\phi$.  Substituting Eq. \eqref{OptDelta}  into Eq. \eqref{LogLikeli}, we have
\begin{align}
&\mathcal{L}(\delta, \theta, \phi)  \notag \\
 =& -  N\log  {\pi}  -   N\log \sigma^2 - \frac{ \left \| \mathbf{y}  - \frac{ \mathbf{D} \mathbf{b} (\theta, \phi) \mathbf{b}^H(\theta, \phi)  \mathbf{D}^H}{\| \mathbf{D} \mathbf{b}(\theta, \phi) \|_2^2} \mathbf{y} \right \|_2^2}{\sigma^2}
\end{align}
Since
\begin{align}
& \left \| \mathbf{y}  - \frac{ \mathbf{D} \mathbf{b} (\theta, \phi) \mathbf{b}^H(\theta, \phi)  \mathbf{D}^H}{\| \mathbf{D} \mathbf{b}(\theta, \phi) \|_2^2} \mathbf{y} \right \|_2^2 \notag \\
=& \mathbf{y}^H (\mathbf{I}  - \frac{ \mathbf{D} \mathbf{b} (\theta, \phi) \mathbf{b}^H(\theta, \phi)  \mathbf{D}^H}{\| \mathbf{D} \mathbf{b}(\theta, \phi) \|_2^2})  \mathbf{y},
\end{align}
the beam training problem is formulated as
\begin{align}
P1: \quad & \max_{\theta, \phi}  \left\| \frac{\mathbf{b}^H (\theta, \phi)\mathbf{D}^H}{\|  \mathbf{D} \mathbf{b} (\theta, \phi)\|_2}\mathbf{y} \right\|_2^2 \notag \\
&s.t. \; -1\leq \theta < 1 \notag\\
& \;\;\;\;\;\; -1\leq \phi <1 \notag
\end{align}

P1 is a non-convex problem. However, as there are only two real-valued variables to be estimated, a simple yet efficient two-step algorithm can be readily applied to solve P1.
For conciseness, let $g({\theta, \phi}) \triangleq  \left\| \frac{\mathbf{b}^H (\theta, \phi)\mathbf{D}^H}{\|  \mathbf{D} \mathbf{b} (\theta, \phi)\|_2}\mathbf{y} \right\|_2^2$. The two-step algorithm is explained as follows.

\vspace{0.02cm}

\noindent \textbf{Step 1. Joint AoA and AoD Coarse Search}

Set quantization level $Z_{\theta}$ and $Z_{\phi}$, and then exhaustively search for  the $N_{pk}$ largest maxima that satisfy
\begin{align}
g(\theta_{\hat\iota},  \phi_{\hat\kappa}) > g( \theta_{\hat\iota-1},  \phi_{\hat\kappa})\notag \\
g( \theta_{\hat\iota},  \phi_{\hat\kappa}) > g( \theta_{\hat\iota+1},  \phi_{\hat\kappa}) \notag\\
g( \theta_{\hat\iota},  \phi_{\hat\kappa}) > g( \theta_{\hat\iota},  \phi_{\hat\kappa-1}) \notag \\
g( \theta_{\hat\iota},  \phi_{\hat\kappa}) > g( \theta_{\hat\iota},  \phi_{\hat\kappa+1}) \notag
\end{align}
over the discrete grid
\begin{align}
&\mathcal{D} \triangleq \left\{(\theta_{\iota}, \phi_{\kappa}) \Big| \;\theta_{\iota} = -1 + \frac{2\iota-1}{Z_{\theta}}, \iota = 1, 2, \cdots, Z_{\theta}, \right.\notag \\
&\left. \qquad \qquad \quad \;\; \phi_{\kappa} = -1 + \frac{2\kappa-1}{Z_{\phi}}, \kappa = 1, 2, \cdots, Z_{\phi} \right\}
\end{align}

\noindent \textbf{Step 2. Joint AoA and AoD Fine Search}

For a given discrete maximum $(\theta_{\hat \iota}, \phi_{\hat \kappa})^T$, run gradient descent search starting from $( \theta^{(1)},  \phi^{(1)})^T = (\theta_{\hat \iota}, \phi_{\hat \kappa})^T $ as follows
\begin{align}
 \left(
   \begin{array}{c}
     \theta^{(i+1)} \\
     \phi^{(i+1)} \\
   \end{array}
 \right)  =    \left(
   \begin{array}{c}
     \theta^{(i)} \\
     \phi^{(i)} \\
   \end{array}
 \right)    \oplus \lambda    \left(
   \begin{array}{c}
     \frac{\partial g({\theta, \phi})}{\partial \theta} \big|_{\theta = \theta^{(i)}} \\
     \frac{\partial g({\theta, \phi})}{\partial \phi} \big|_{\phi = \phi^{(i)}} \\
   \end{array}
 \right)
\end{align}
{  where $\lambda$ is the preset step size and the expressions of $ \frac{\partial g({\theta, \phi})}{\partial \theta}$ and $\frac{\partial g({\theta, \phi})}{\partial \phi} $ are given in Appendix A.}
The iteration stops when $(\theta^{(i+1)} \ominus \theta^{(i)} )^2 + (\phi^{(i+1)} \ominus \phi^{(i)} )^2 \leq \epsilon$, where $\epsilon$ is a preset parameter.

Repeat the above operations over the rest $N_{pk}-1$ maxima derived in Step 1, and select the best one as  $(\hat{\theta}, \hat{\phi})$. Then,  the exact value of the estimated path gain $\hat{\delta}$ can be subsequently obtained by substituting $(\hat{\theta}, \hat{\phi})$ into Eq. \eqref{OptDelta}.

{ 
\begin{remark}
The  complexity of Step 1 is $\mathcal{O}(2^2 Z_{\phi} Z_{\theta})$. The  complexity of Step 2 mainly arises from the computation of the gradients  $ \frac{\partial g({\theta, \phi})}{\partial \theta}$ and $\frac{\partial g({\theta, \phi})}{\partial \phi} $, which, according to Eq. \eqref{Gradient1} and Eq. \eqref{Gradient2},  is $\mathcal{O}(N_BN_MN)$ (or $\mathcal{O}(N_{R_i}N_MN)$). Hence, the complexity of Step 2 is $\mathcal{O}(n_{iter}N_{pk}N_BN_MN)$ or ($\mathcal{O}(n_{iter}N_{pk}N_{R_i}N_MN)$), where the iteration number $n_{iter}$ depends on step size and stopping criterion of the gradient method and is generally less than $20$. Thus, the overall complexity is $\mathcal{O}(2^2 Z_{\phi} Z_{\theta} + n_{iter}N_{pk}N_{B}N_MN)$ (or $\mathcal{O}(2^2 Z_{\phi} Z_{\theta} + n_{iter}N_{pk}N_{R_i}N_MN)$).

\end{remark}
}

% \begin{remark}
% Other than separating beam training of different paths (LoS, VLoS) in time domain as proposed in this section, BS/AP with multiple RF chains can perform beam training for multiple paths simultaneously by separating them  in  frequency domain to further reduce time cost.
% \end{remark}

% in Phase $2$, we estimate blockage indicator $\zeta$ and verify the estimated parameters $(\hat{\delta}, \hat{\theta}, \hat{\phi})$  to decide whether they are \emph{reliable} (link is unblocked, i.e., $\zeta = 1$) or not (link is blocked, i.e., $\zeta = 0$). The prior distribution of $(\delta, \theta, \phi)$  is dependent on cell shape, BS/AP and IRS position, user number, user movement pattern, frequency band, reflection loss, etc, and is difficult to be properly modeled.

\subsection{ Uniqueness of The Estimated AoA and AoD Pair}
To delve into the effectiveness of beam training with random beamforming, conditions under which $(\theta, \phi)$ can be accurately estimated from the measurement signal $\mathbf{y}$ are studied  in the ideal scenario without noise or interference.

Firstly, two definitions of uniqueness are introduced as follows.
 \\
(1) \textbf{Uniqueness of measurement signal representation}, namely
\begin{align} \label{UniRresentation}
 \mathbf{y} &= \delta \mathbf{D} \mathbf{b} (\theta, \phi) \notag \\
 & \neq  \widetilde{\delta} \mathbf{D} \mathbf{b} (\widetilde{\theta}, \widetilde{\phi}),\;\; \;\; \forall \widetilde{\delta} \in \mathbb{C},  \forall (\widetilde{\theta}, \widetilde{\phi})  \neq (\theta, \phi)
\end{align}
\noindent (2) \textbf{Uniqueness of estimated AoA and AoD pair}, namely
\begin{align} \label{UniEst}
\left\| \frac{\mathbf{b}^H ( {\theta},  {\phi})\mathbf{D}^H}{\|  \mathbf{D} \mathbf{b} ( {\theta},  {\phi})\|_2} \mathbf{y} \right\|_2  >   \left\| \frac{\mathbf{b}^H (\widetilde{\theta}, \widetilde{\phi})\mathbf{D}^H}{\|  \mathbf{D} \mathbf{b} (\widetilde{\theta}, \widetilde{\phi})\|_2}  \mathbf{y}\right\|_2, \; \forall (\widetilde{\theta}, \widetilde{\phi})  \neq (\theta, \phi)
\end{align}
Uniqueness of measurement signal representation means that any AoA, AoD pair $(\widetilde\theta, \widetilde\phi)$ that differs from $(\theta, \phi)$ cannot construct the measurement signal $\mathbf{y}$. It is an inherent property of the \emph{sampling method},  which is primarily determined by $\mathbf{D}$.   By contrast, uniqueness of the estimated AoA and AoD depends on both  \emph{sampling method} and \emph{estimation method}. It indicates that AoA, AoD pair can be accurately estimated from the measurement signal $\mathbf{y}$ using a specific estimation method. % Here, the uniqueness type (2) corresponds to the maximum log-likelihood based joint AoA and AoD estimation.

In the following Theorem, we will study the relationship between the above two types of uniqueness.
\begin{theorem}
As long as uniqueness of measurement signal representation is satisfied, ML method is capable to accurately estimate the AoA, AoD pair.
\end{theorem}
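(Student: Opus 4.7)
The plan is to work in the noiseless setting (where $\mathbf{y} = \delta \mathbf{D}\mathbf{b}(\theta,\phi)$) and show that the ML objective of P1, namely
\begin{equation*}
g(\widetilde\theta,\widetilde\phi) = \frac{\left|\mathbf{b}^H(\widetilde\theta,\widetilde\phi)\mathbf{D}^H\mathbf{y}\right|^2}{\|\mathbf{D}\mathbf{b}(\widetilde\theta,\widetilde\phi)\|_2^2},
\end{equation*}
is strictly maximized at $(\theta,\phi)$ whenever the uniqueness of measurement signal representation in \eqref{UniRresentation} holds. The workhorse is the Cauchy--Schwarz inequality, and the role of uniqueness is to rule out the equality case.

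First, I would plug $\mathbf{y}=\delta\mathbf{D}\mathbf{b}(\theta,\phi)$ into $g$ at the true parameters and obtain the closed-form value $g(\theta,\phi)=|\delta|^2\|\mathbf{D}\mathbf{b}(\theta,\phi)\|_2^2$. Then, for an arbitrary candidate $(\widetilde\theta,\widetilde\phi)$ I would write
\begin{equation*}
g(\widetilde\theta,\widetilde\phi) = |\delta|^2\,\frac{\left|\langle \mathbf{D}\mathbf{b}(\widetilde\theta,\widetilde\phi),\, \mathbf{D}\mathbf{b}(\theta,\phi)\rangle\right|^2}{\|\mathbf{D}\mathbf{b}(\widetilde\theta,\widetilde\phi)\|_2^2},
\end{equation*}
and apply Cauchy--Schwarz to the inner product in the numerator. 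This immediately yields $g(\widetilde\theta,\widetilde\phi)\le g(\theta,\phi)$, which is only the weak inequality.

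The key step, where uniqueness of signal representation enters, is ruling out equality when $(\widetilde\theta,\widetilde\phi)\neq(\theta,\phi)$. Equality in Cauchy--Schwarz forces $\mathbf{D}\mathbf{b}(\widetilde\theta,\widetilde\phi)=\alpha\,\mathbf{D}\mathbf{b}(\theta,\phi)$ for some $\alpha\in\mathbb{C}$. If $\alpha\neq 0$, set $\widetilde\delta=\delta/\alpha\in\mathbb{C}$; then $\widetilde\delta\,\mathbf{D}\mathbf{b}(\widetilde\theta,\widetilde\phi)=\delta\,\mathbf{D}\mathbf{b}(\theta,\phi)=\mathbf{y}$, directly contradicting \eqref{UniRresentation}. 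If instead $\alpha=0$, then $\mathbf{D}\mathbf{b}(\widetilde\theta,\widetilde\phi)=\mathbf{0}$ and $g(\widetilde\theta,\widetilde\phi)=0$, which is strictly below $g(\theta,\phi)=|\delta|^2\|\mathbf{D}\mathbf{b}(\theta,\phi)\|_2^2>0$ (assuming the true path is present, i.e., $\delta\neq 0$ and $\mathbf{D}\mathbf{b}(\theta,\phi)\neq\mathbf{0}$). Either way the inequality is strict, so $(\theta,\phi)$ is the unique maximizer of $g$, matching the definition \eqref{UniEst} of uniqueness of the estimated AoA/AoD pair; substituting this maximizer back through \eqref{OptDelta} recovers $\hat\delta=\delta$.

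I do not expect any real obstacle here; the proof is essentially a one-line application of Cauchy--Schwarz plus a contradiction argument. The only subtlety worth flagging explicitly is the degenerate case $\alpha=0$ above, and the implicit assumption that the path is non-vanishing (i.e., $\delta\neq 0$), which is the regime in which the estimation problem is meaningful in the first place.
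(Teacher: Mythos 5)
Your proof is correct and follows essentially the same route as the paper's Appendix B: apply Cauchy--Schwarz to bound the ML objective at any candidate $(\widetilde\theta,\widetilde\phi)$ by its value at the true pair, then observe that equality would force $\mathbf{D}\mathbf{b}(\widetilde\theta,\widetilde\phi)$ to be a scalar multiple of $\mathbf{D}\mathbf{b}(\theta,\phi)$, contradicting the uniqueness of the measurement signal representation in \eqref{UniRresentation}. Your treatment is in fact slightly more careful than the paper's, since you explicitly carry the gain $\delta$ and handle the degenerate case $\mathbf{D}\mathbf{b}(\widetilde\theta,\widetilde\phi)=\mathbf{0}$, but these are refinements of the same argument rather than a different approach.
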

\begin{proof}
See Appendix B.
\end{proof}

%Then, we study the generation method of sampling matrix $\mathbf{D}$ and its capability to guarantee uniqueness of estimated AoA/AoD pair.

%The key of  uniqueness of sample signal representation lays in the sampling matrix $\mathbf{D}$.

According to Theorem 1, the uniqueness of AoA and AoD estimation is equivalent to the uniqueness of measurement signal representation, which means we just need to investigate the conditions on which uniqueness of measurement signal representation can be achieved.

Before studying the sensing matrix $\mathbf{D}$, we will observe the signal space of channel response.
The vectorized response of LoS path, namely $\mathbf{h} = \delta \mathbf{b}(\theta, \phi)$, is a high dimensional ($N_rN_t$-dimensional) variable that is characterized by $(\delta, \theta, \phi)$, and we define the signal space of $\mathbf{h}$ as
\begin{align}
\mathcal{S} \triangleq \{ \delta \mathbf{b}(\theta, \phi)| \delta\in \mathbb{C}, -1\leq \theta, \phi <1 \}
\end{align}
$\mathcal{S}$ is a nonlinear $k$-dimensional ($k=3$) submanifold of $\mathbb{C}^{N_rN_t}$ with the parameters $(\delta, \theta, \phi)$ \cite{baraniuk2009random,clarkson2008tighter}. As $\mathbf{b}(\theta, \phi)$ is the Kronecker product of two array steering vectors, $\mathcal{S}$ is indeed the so-called \emph{array manifold} \cite{Manifold1}. Thus, one channel realization $\check{\mathbf{h}}$ with the parameters $(\check{\delta}, \check{\theta}, \check{\phi})$ can  be seen as a point in the array manifold. The dimensionality $k$ can be interpreted as an ``information level" of the signal, analogous to the sparsity level in compressive sensing problems \cite{baraniuk2009random,CompressedSensingBook,candes2006stable}. In \cite{baraniuk2009random},  it is proved that signals obeying manifold models can also be recovered from only a few measurements, simply by replacing the traditional compressive sensing model of sparsity with a manifold model for $\mathbf{h}$.  The above statement is supported by Lemma 1.

\begin{lemma}
For a random orthoprojector $\boldsymbol{\Phi}\in \mathbb{C}^{M \times N}$, the following statement
\begin{align}
(1-\epsilon)\sqrt{\frac{M}{N}}\leq \frac{\left\| \boldsymbol{\Phi}\mathbf{h}_{1} - \boldsymbol{\Phi}\mathbf{h}_{2} \right\|_2^2}{\|  \mathbf{h}_1 - \mathbf{h}_2 \|_2^2} \leq (1+\epsilon)\sqrt{\frac{M}{N}},\notag\\
   \forall \mathbf{h}_1, \mathbf{h}_2 \in \mathcal{S},  \mathbf{h}_1 \neq  \mathbf{h}_2
\end{align}
holds with high probability, when dimensionality $M$ of the projected low-dimensional space is sufficient \footnotemark, where $\mathbf{h}_1\in \mathcal{S}, \mathbf{h}_2 \in \mathcal{S}$, $\mathbf{h}_1 \neq \mathbf{h}_2$, $0<\epsilon<1$ is the isometry constant \cite{baraniuk2009random}.
\end{lemma}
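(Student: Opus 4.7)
The plan is to invoke the manifold restricted isometry property established by Baraniuk and Wakin in \cite{baraniuk2009random}, to which the paper already points. The central observation is that $\mathcal{S} = \{\delta \mathbf{b}(\theta,\phi): \delta\in\mathbb{C},\; -1\leq\theta,\phi<1\}$ is a smooth $k$-dimensional submanifold (here $k=3$, or $k=4$ once we separate real and imaginary parts of $\delta$) of $\mathbb{C}^{N_r N_t}$, parameterized by the smooth map $(\delta,\theta,\phi)\mapsto \delta\,\mathbf{a}_{Rx}(\theta)\otimes\mathbf{a}_{Tx}^{*}(\phi)$. Restricting $|\delta|$ to a bounded range (which is physically justified, since the path gain is finite and bounded away from zero whenever a path exists) makes $\mathcal{S}$ a compact subset of a smooth manifold.

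Given this, I would carry out the proof in three steps. First, I would estimate the intrinsic geometric quantities of $\mathcal{S}$ needed by the Baraniuk--Wakin theorem: (i) an upper bound on its $k$-dimensional volume, which follows from the uniform boundedness of the derivatives of $\mathbf{b}(\theta,\phi)$ in $\theta,\phi$, and (ii) a lower bound on its condition number / reach, obtained by bounding the second fundamental form using the fact that each entry of $\mathbf{b}(\theta,\phi)$ is a smooth complex exponential with bounded frequency ($|n_B|,|n_M|\leq \max(N_B,N_M)-1$). Second, I would construct a fine $\eta$-net $\mathcal{N}\subset\mathcal{S}$ whose cardinality is $\mathcal{O}((\mathrm{vol}(\mathcal{S})/\eta)^{k})$, apply the standard Johnson--Lindenstrauss style concentration inequality for random orthoprojectors to every pair $(\mathbf{h}_i,\mathbf{h}_j)\in\mathcal{N}\times\mathcal{N}$, and union-bound over the $|\mathcal{N}|^{2}$ pairs. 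Third, I would extend the isometry from the net to all of $\mathcal{S}$: for arbitrary $\mathbf{h}_{1},\mathbf{h}_{2}\in\mathcal{S}$, approximate each by its nearest net point and control the error using both the reach bound (for short secants, where tangent-space geometry dominates) and the net spacing (for long secants, handled directly). This yields the stated two-sided bound with isometry constant $\epsilon$, provided the projection dimension $M$ is at least of order $k\log(N_rN_t/\eta)/\epsilon^{2}$, which makes precise the qualitative phrase ``when $M$ is sufficient''.

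The main obstacle I anticipate is the short-secant regime. When $\mathbf{h}_{1}$ and $\mathbf{h}_{2}$ are very close on $\mathcal{S}$, the unit secant $(\mathbf{h}_{1}-\mathbf{h}_{2})/\|\mathbf{h}_{1}-\mathbf{h}_{2}\|$ approaches a tangent vector, and naive $\eta$-net arguments inflate the ratio $\|\boldsymbol{\Phi}\mathbf{h}_{1}-\boldsymbol{\Phi}\mathbf{h}_{2}\|/\|\mathbf{h}_{1}-\mathbf{h}_{2}\|$ by a factor that diverges as the points coalesce. Handling this requires passing from the manifold $\mathcal{S}$ to its \emph{tangent bundle} (or equivalently to the normalized secant set) and proving a concentration bound uniformly over tangent directions; this is exactly the delicate ingredient that Baraniuk--Wakin address via the reach/condition-number analysis, and it is why a bound on the second fundamental form of $\mathcal{S}$ is indispensable. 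Once this short-secant case is controlled, the long-secant case is routine and the two regimes patch together to give the lemma.
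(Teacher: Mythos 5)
The paper does not actually prove this lemma --- it is imported verbatim from \cite{baraniuk2009random} --- and your sketch is a faithful reconstruction of the proof architecture of that reference: bound the intrinsic geometric regularity of the manifold (volume and reach/condition number), cover it with a net, apply Johnson--Lindenstrauss-type concentration for a random orthoprojector with a union bound, and treat the short-secant/tangent-direction regime separately, which is indeed the delicate step. Your added observation that $|\delta|$ must be confined to a compact range bounded away from zero (otherwise $\mathcal{S}$ is non-compact and pinches to a point at $\delta=0$, degenerating the reach) is correct and is a hypothesis the paper's definition of $\mathcal{S}$ silently omits; beyond that the proposal is a sound strategy rather than a complete proof, with the quantitative estimates for the array manifold deferred exactly as the paper itself defers them to the citation.
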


\footnotetext{The sufficient number of $M$ is related to $\epsilon$ and several manifold-related factors, e.g.,  condition number, volume, and geodesic covering regularity. Detailed analysis can be referred to \cite{baraniuk2009random,clarkson2008tighter}. In practice, the exact relationship between the sufficient number and its dependent factors is of limited significance due to the following two reasons, (1) the received measurement signal is corrupted by noise, (2)$M$ can be online adjusted according to channel conditions.}

\begin{remark}
$\|\mathbf{h}_1 - \mathbf{h}_2 \|_2^2$ is the Euclidean distance between two points $\mathbf{h}_1 $, $\mathbf{h}_2$ on the manifold, and $\left\| \boldsymbol{\Phi}\mathbf{h}_{1} - \boldsymbol{\Phi}\mathbf{h}_{2} \right\|_2^2 $ is the Euclidean distance between the projected points $\boldsymbol{\Phi}\mathbf{h}_{1}, \boldsymbol{\Phi}\mathbf{h}_{2}$ on the image of $\mathcal{S}$ (namely $\boldsymbol{\Phi}\mathcal{S}$). The isometry constant $\epsilon$  measures the degree that the pairwise Euclidean distance between points on $\mathcal{S}$ is preserved under the mapping $\boldsymbol{\Phi}$. Apparently, Lemma 1 indicates that $\left\| \boldsymbol{\Phi}\mathbf{h}_{1} - \boldsymbol{\Phi}\mathbf{h}_{2} \right\|_2^2>0$ is satisfied with high probability, as it is a weaker condition than Lemma 1.
\end{remark}

Although the sensing matrix $\mathbf{D}$ is not necessarily an orthoprojector,  via singular value decomposition, it can be decomposed as $\mathbf{D} = \widetilde{\boldsymbol{\Psi}}\widetilde{\boldsymbol{\Lambda}} \widetilde{\boldsymbol{\Phi}}$, where $\widetilde{\boldsymbol{\Psi}} \in \mathbb{C}^{M\times M},  \widetilde{\boldsymbol{\Lambda}} \in \mathbb{C}^{M\times M}$, and $\widetilde{\boldsymbol{\Phi}} \in  \mathbb{C}^{M\times N}$. Then, we have $\|\mathbf{D}\mathbf{h}_{1} - \mathbf{D}\mathbf{h}_{2} \|_2^2  =  \| \widetilde{\boldsymbol{\Lambda}} \widetilde{\boldsymbol{\Phi}}\mathbf{h}_{1} - \widetilde{\boldsymbol{\Lambda}}\widetilde{\boldsymbol{\Phi}}\mathbf{h}_{2} \|_2^2$, where
$\widetilde{\boldsymbol{\Phi}}$ is indeed the orthoprojector, and $\widetilde{\boldsymbol{\Lambda}} $ is a diagonal matrix with non-zero elements that scales the component in each dimension.    $\|\widetilde{\boldsymbol{\Phi}}\mathbf{h}_{1} - \widetilde{\boldsymbol{\Phi}}\mathbf{h}_{2} \|_2^2> 0$ implicates $ \|\mathbf{D}\mathbf{h}_{1} - \mathbf{D}\mathbf{h}_{2} \|_2^2> 0$, which is equivalent to $\mathbf{D}\mathbf{h}_{1} \neq \mathbf{D}\mathbf{h}_{2}$, namely, $
\delta_1\mathbf{D}\mathbf{b}(\theta_1,\phi_1) \neq  \delta_2\mathbf{D}\mathbf{b}(\theta_2, \phi_2),\; \forall (\delta_1, \theta_1, \phi_1) \neq (\delta_2, \theta_2, \phi_2)$.
Thus, it is easy to find that $ \mathbf{D}\mathbf{b}(\theta_1,\phi_1) \neq  {\mu} \mathbf{D}\mathbf{b}(\theta_2, \phi_2),\; \forall (  \theta_1, \phi_1) \neq ( \theta_2, \phi_2), \forall  {\mu}\in \mathbb{C}$,
where $ {\mu} \triangleq \frac{\delta_2}{\delta_1}$.

To conclude, the randomly generated sensing matrix $\mathbf{D}$ has a large probability to guarantee the uniqueness of ML based joint AoA and AoD estimation.

\subsection{On The Impact of Training Length $N$}
Theorem 1 indicates that, with random beamforming,  Eq. \eqref{UniEst} holds with high probability. In other words, in noiseless scenario, the distance gap between the highest peak  (global optimum) and other peaks (other local optimums) exist with high probability. However, in practice, corrupted by noise and interference, the highest peak may (1) shift to its adjacent points, or (2) be transcended and replaced by other peaks. Error Type 1 incurs mild AoA, AoD estimation error followed by power loss of an acceptable level; Error Type 2 incurs significant AoA, AoD estimation error followed by beam misalignment. Apparently, we would like to avoid Error Type 2.

To study the estimation error, the pairwise error probability (PEP) of any two parameter sets $(\theta, \phi)$ and $(\widetilde{\theta}, \widetilde{\phi})$ is derived in the following theorem.
\begin{theorem}
The PEP $Pe\left((\theta, \phi) \rightarrow (\widetilde{\theta}, \widetilde{\phi})\right)$ that $(\theta, \phi) $  is mistaken as $(\widetilde{\theta}, \widetilde{\phi})$ in relatively high SNR regime can be approximated as
\begin{align}\label{Th1}
 \qquad Pe\left((\theta, \phi) \rightarrow (\widetilde{\theta}, \widetilde{\phi})\right) \approx   Q\left(\frac{|\delta|^2}{2\sigma^2}  d^2(\mathbf{D}, \theta, \phi, \widetilde{\theta}, \widetilde{\phi}) \right)
\end{align}
{ where the Q-function is the tail distribution function of the standard normal distribution \cite{chiani2003new}}, and
\begin{align}
 d^2(\mathbf{D}, \theta, \phi, \widetilde{\theta}, \widetilde{\phi})  \triangleq \left\| \mathbf{D} \mathbf{b} (\theta, \phi)\right\|_2^2  -  \frac{|\mathbf{b}^H (\widetilde{\theta}, \widetilde{\phi})\mathbf{D}^H \mathbf{D} \mathbf{b} (\theta, \phi)|^2 }{\|  \mathbf{D} \mathbf{b} (\widetilde{\theta}, \widetilde{\phi})\|_2} \notag
\end{align}
\end{theorem}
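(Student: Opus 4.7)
The plan is to work directly from the ML cost function. Define the unit projection directions $\mathbf{u}_1 \triangleq \mathbf{D}\mathbf{b}(\theta,\phi)/\|\mathbf{D}\mathbf{b}(\theta,\phi)\|_2$ and $\mathbf{u}_2 \triangleq \mathbf{D}\mathbf{b}(\widetilde\theta,\widetilde\phi)/\|\mathbf{D}\mathbf{b}(\widetilde\theta,\widetilde\phi)\|_2$, so that $g(\theta,\phi) = |\mathbf{u}_1^H \mathbf{y}|^2$ and $g(\widetilde\theta,\widetilde\phi) = |\mathbf{u}_2^H \mathbf{y}|^2$. The PEP is then $\Pr\bigl\{|\mathbf{u}_2^H \mathbf{y}|^2 > |\mathbf{u}_1^H \mathbf{y}|^2\bigr\}$. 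Substituting $\mathbf{y} = \delta\mathbf{D}\mathbf{b}(\theta,\phi) + \mathbf{n}$ with $\mathbf{n}\sim\mathcal{CN}(\mathbf{0},\sigma^2\mathbf{I}_N)$, and introducing the scalar shorthands $\beta \triangleq \|\mathbf{D}\mathbf{b}(\theta,\phi)\|_2$, $\alpha \triangleq \mathbf{u}_2^H \mathbf{D}\mathbf{b}(\theta,\phi)$ and $n_i \triangleq \mathbf{u}_i^H \mathbf{n} \sim \mathcal{CN}(0,\sigma^2)$, the two statistics become $\mathbf{u}_1^H\mathbf{y} = \delta\beta + n_1$ and $\mathbf{u}_2^H\mathbf{y} = \delta\alpha + n_2$. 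By Cauchy--Schwarz, $|\alpha|\le \beta$, so in the noiseless limit the decision would be trivially correct; the PEP arises only through the noise contributions.

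I next invoke the high-SNR linearization $|a+n|^2 \approx |a|^2 + 2\operatorname{Re}(a^\ast n)$, valid when $|\delta|\beta,|\delta|\,|\alpha| \gg \sigma$. This recasts the error event $|\delta\alpha+n_2|^2 > |\delta\beta + n_1|^2$ as the half-space inequality
\begin{equation*}
Z \triangleq 2\operatorname{Re}\!\bigl(\delta^\ast\beta\,n_1 - \delta^\ast\alpha^\ast\,n_2\bigr) < -|\delta|^2\bigl(\beta^2 - |\alpha|^2\bigr),
\end{equation*}
where $\beta \in \mathbb{R}_{>0}$ has been used. Since $Z$ is a real linear combination of jointly complex Gaussian variables, it is itself a zero-mean real Gaussian, so the PEP collapses to a single standard-normal tail to be evaluated by a $Q$-function.

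The main obstacle is the careful computation of $\operatorname{Var}(Z)$, because $n_1$ and $n_2$ are \emph{correlated} projections of the same noise vector: $\mathbb{E}[n_1 n_2^\ast] = \sigma^2\,\mathbf{u}_1^H \mathbf{u}_2 = \sigma^2 \alpha^\ast/\beta$, where the last identity follows directly from the definitions of $\alpha$ and $\beta$. I expect this correlation to interact with the $|\alpha|^2$ and $\beta^2$ terms in a way that the cross-contributions produce exactly $-2|\alpha|^2$, yielding the clean cancellation $\operatorname{Var}(Z) = 2|\delta|^2\sigma^2\bigl(\beta^2 - |\alpha|^2\bigr)$. The bookkeeping of complex conjugates inside $\operatorname{Re}(\cdot)$ is the only subtle step; everything else is a routine real Gaussian computation.

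Finally, applying the standard Gaussian tail formula gives
\begin{equation*}
Pe \;\approx\; Q\!\left(\frac{|\delta|^2(\beta^2-|\alpha|^2)}{\sqrt{\operatorname{Var}(Z)}}\right),
\end{equation*}
and upon identifying $d^2(\mathbf{D},\theta,\phi,\widetilde\theta,\widetilde\phi) = \beta^2 - |\alpha|^2 = \|\mathbf{D}\mathbf{b}(\theta,\phi)\|_2^2 - |\mathbf{b}^H(\widetilde\theta,\widetilde\phi)\mathbf{D}^H\mathbf{D}\mathbf{b}(\theta,\phi)|^2/\|\mathbf{D}\mathbf{b}(\widetilde\theta,\widetilde\phi)\|_2^2$, the argument simplifies to the form stated in the theorem. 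Two sanity checks will be worth making explicit: (i) $d^2\ge 0$ by Cauchy--Schwarz, so $Pe\le 1/2$ as required; and (ii) the approximation $|a+n|^2\approx|a|^2+2\operatorname{Re}(a^\ast n)$ is the precise place where the ``relatively high SNR'' qualifier enters, since dropping the $|n|^2$ terms from both sides is only valid when $|\delta|^2\min(\beta^2,|\alpha|^2) \gg \sigma^2$.
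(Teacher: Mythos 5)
Your proposal is correct and follows essentially the same route as the paper's Appendix C: write the error event for the two matched-filter statistics, drop the terms quadratic in the noise at high SNR, and evaluate the resulting zero-mean real Gaussian tail (your $Z$ is exactly the negative of the paper's $N_1$, and your variance $2|\delta|^2\sigma^2(\beta^2-|\alpha|^2)$ matches the paper's). The only caveat is your closing claim that $Q\bigl(|\delta|^2 d^2/\sqrt{\operatorname{Var}(Z)}\bigr)$ "simplifies to the form stated in the theorem" — it actually yields $Q\bigl(\sqrt{|\delta|^2 d^2/(2\sigma^2)}\bigr)$ rather than $Q\bigl(|\delta|^2 d^2/(2\sigma^2)\bigr)$, but this same mismatch between derivation and stated formula is present in the paper itself.
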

\begin{proof}
See Appendix C.
\end{proof}

\begin{figure}[t]
\begin{minipage}[!h]{0.48\linewidth}
\centering
\includegraphics[ width=1.1\textwidth]{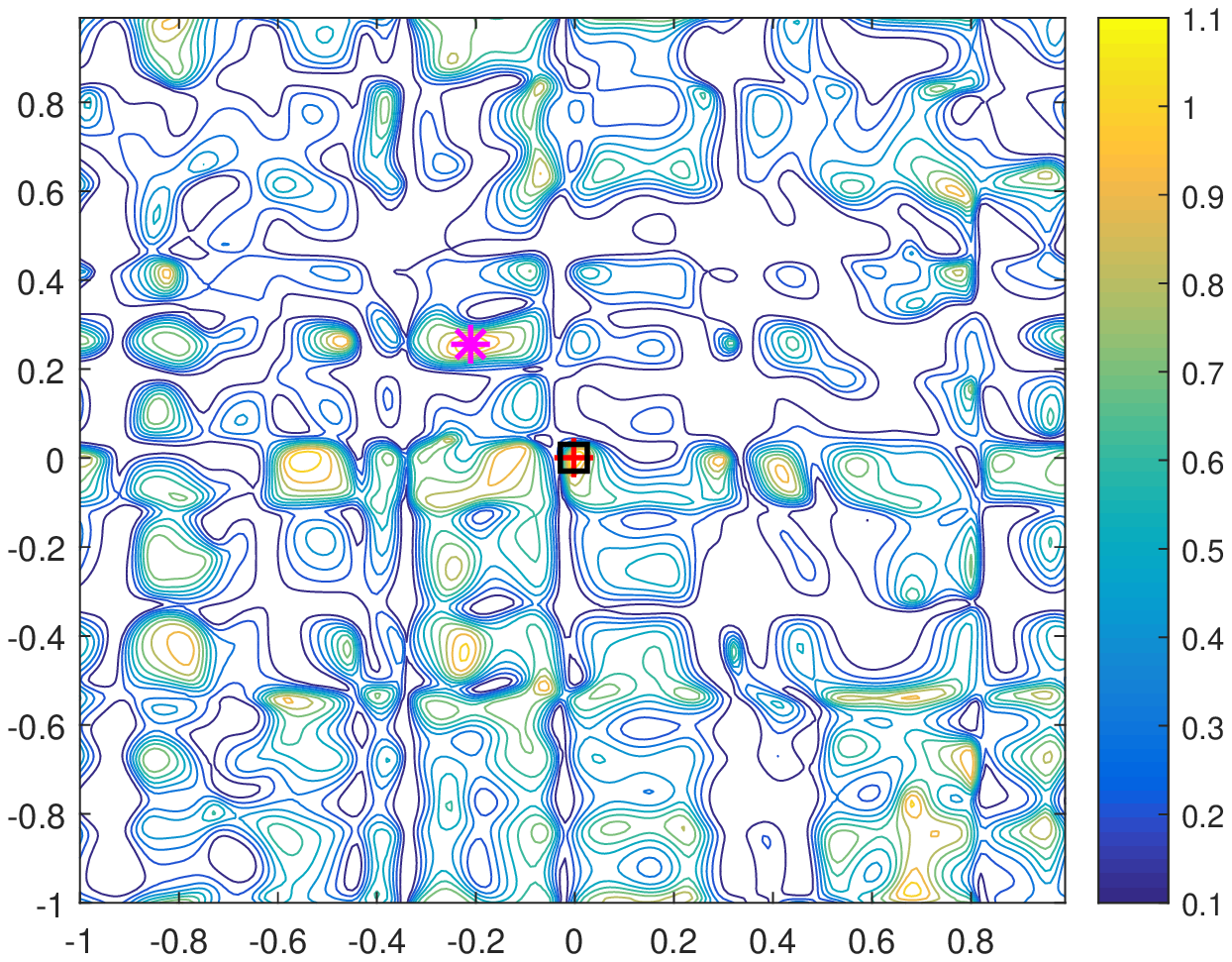}
\subcaption{$N=4$ }
\label{N4}
\end{minipage}
\begin{minipage}[!h]{0.48\linewidth}
\centering
\includegraphics[ width=1.1\textwidth]{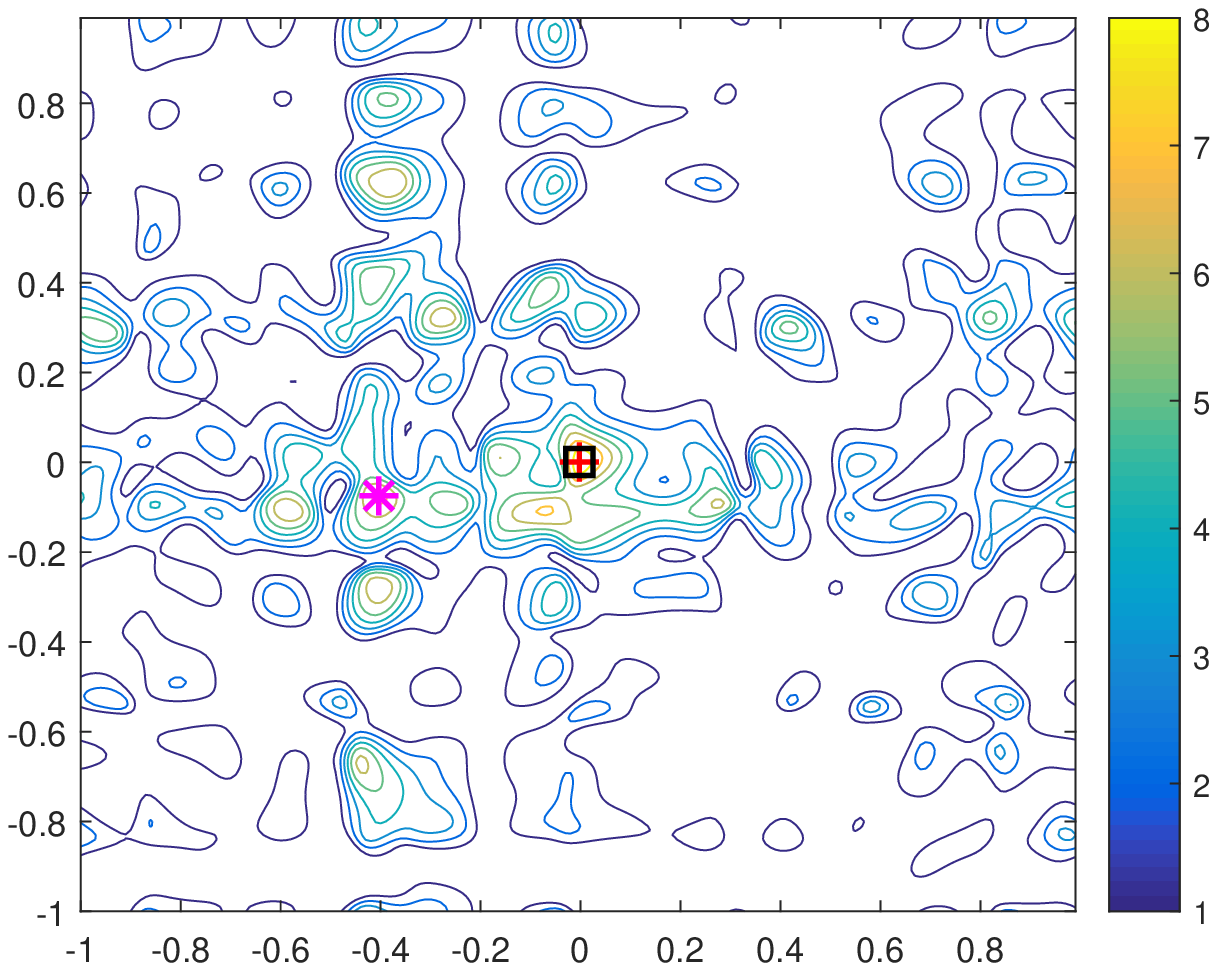}
\subcaption{$N=8$}
\label{N8}
\end{minipage}
\begin{minipage}[!h]{0.48\linewidth}
\centering
\includegraphics[ width=1.1\textwidth]{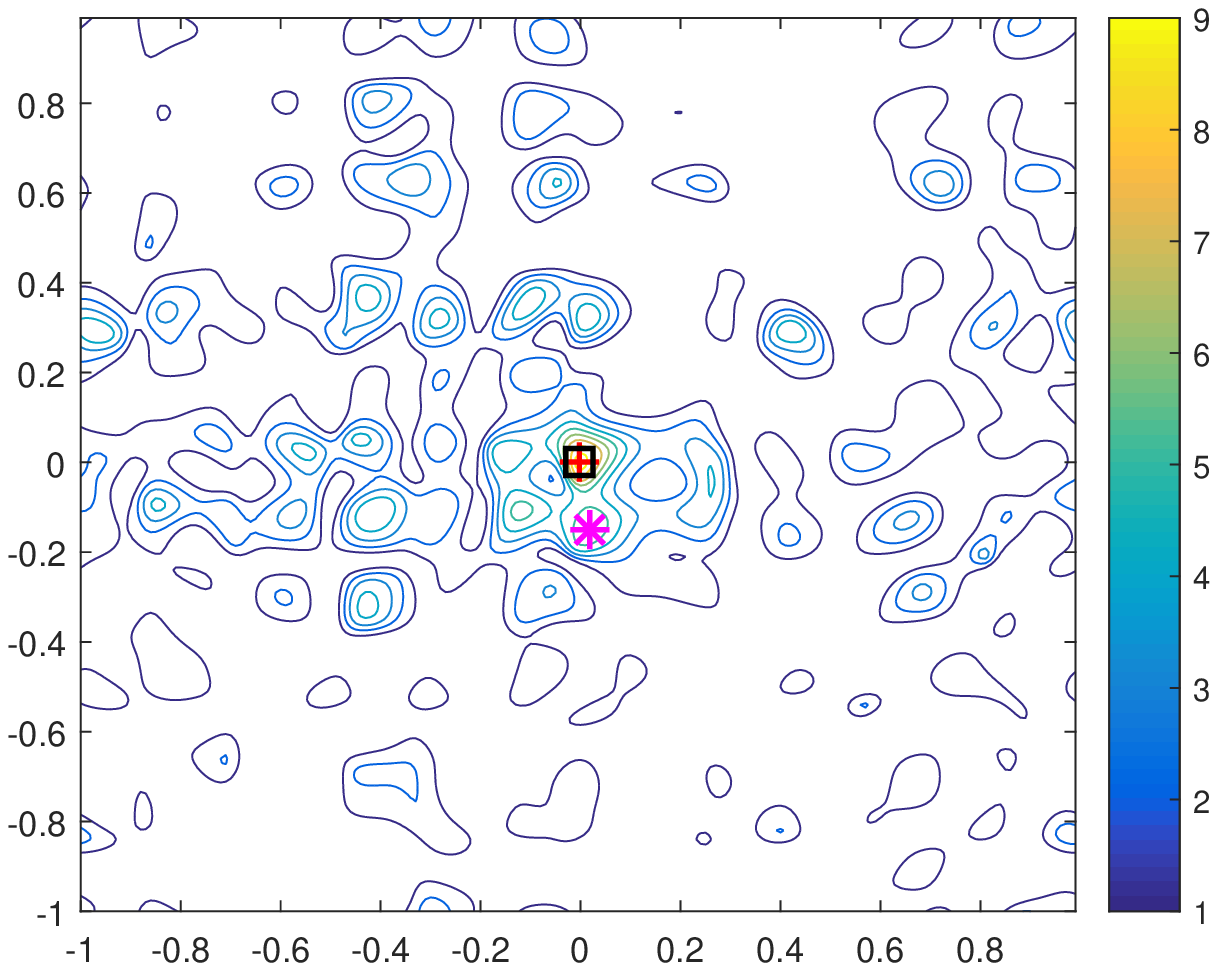}
\subcaption{$N=12$ }
\label{N12}
\end{minipage}
\hspace{0.17cm}\begin{minipage}[!h]{0.48\linewidth}
\centering
\includegraphics[ width=1.1\textwidth]{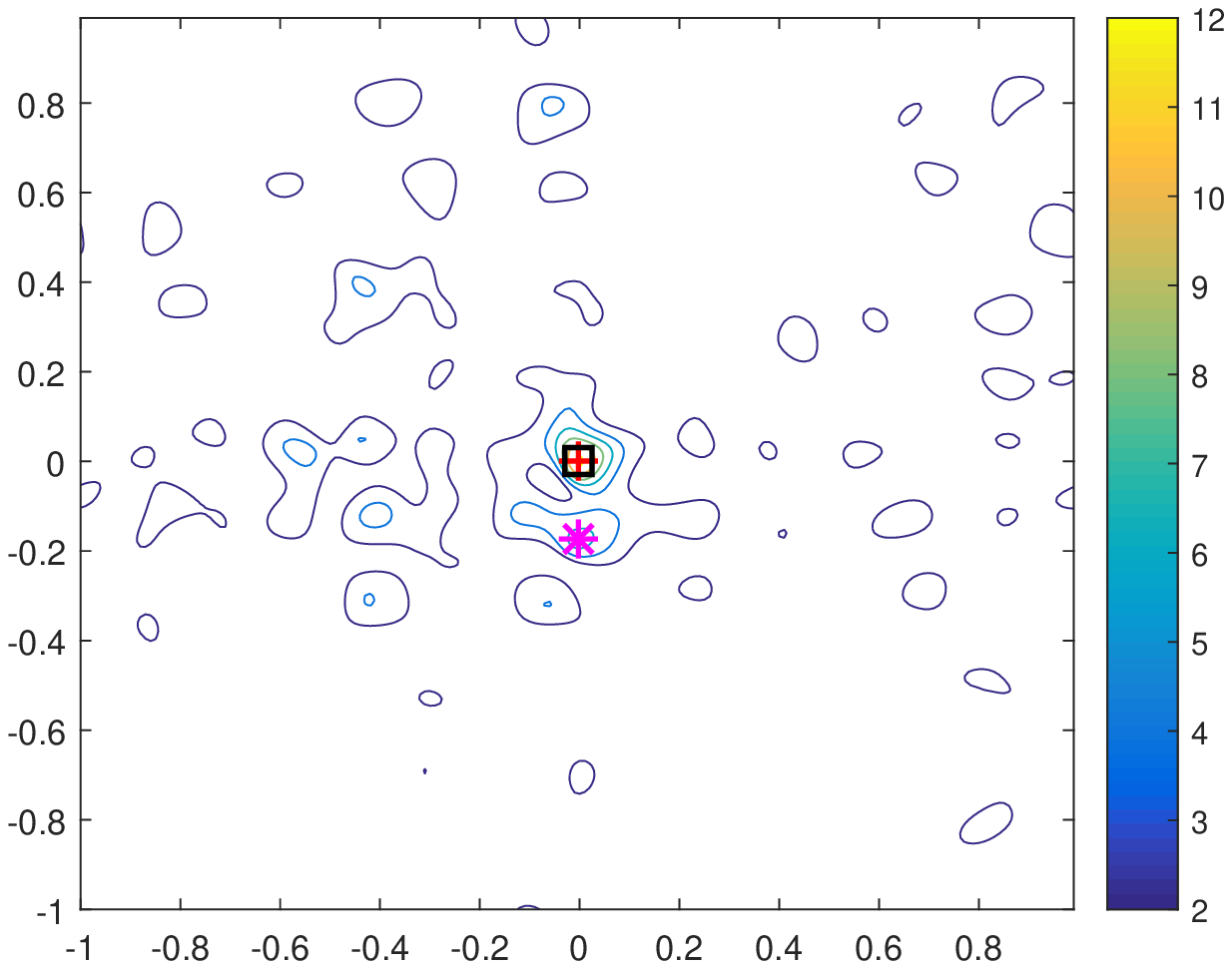}
\subcaption{$N=16$}
\label{N16}
\end{minipage}
\caption{Contour plots of $g(\theta,\phi)$ with different training lengths (Red cross represents the position of the first peak, purple asterisk represents the position of the second peak, and black square is the position of the actual AoA, AoD pair)}
\label{Contour}
\end{figure}

\begin{table}[tp] \centering
\noindent
\caption{Peak values of $g(\theta,\phi)$ over training length}\label{T1}
\begin{tabular}{|l|*{5}{c|}}\hline
\makebox[7em] {Training Length}
&\makebox[5em]{Peak 1}&\makebox[5em]{Peak 2} &\makebox[7em]{Peak 1 $-$ Peak 2}\\\hline\hline
\qquad $\;N=4$  & 1.1156   &  1.1044 &  0.0112    \\\hline
\qquad $\;N=8$   & 8.7223   &  7.2658  &  1.4573   \\\hline
\qquad $N=12$    & 9.4986   &  5.8000 &  3.6986    \\\hline
\qquad $N=16$    & 12.3338  &  6.6508  &  5.6830   \\\hline
\end{tabular}

\end{table}

Theorem 2 indicates that PEP is inversely proportional to $d^2(\mathbf{D}, \theta, \phi, \widetilde{\theta}, \widetilde{\phi})$. To build the connection between PEP and training length $N$, Proposition 1 is derived.

\begin{proposition}
$d^2(\mathbf{D}_N, \theta, \phi, \widetilde{\theta}, \widetilde{\phi})$  is monotonically  increasing over training length $N$, where $\mathbf{D}_{N} = \left[\mathbf{D}_{N-1}^H \; \mathbf{d}_{N}\right]^H$,  i.e.,
\begin{align}
d^2(\mathbf{D}_N, \theta, \phi, \widetilde{\theta}, \widetilde{\phi}) \geq  d^2(\mathbf{D}_{N-1}, \theta, \phi, \widetilde{\theta}, \widetilde{\phi})
\end{align}
and the equality holds only if
\begin{align}
&\frac{\mathbf{b}^H(\widetilde{\theta},\widetilde{\phi})\mathbf{d}_N\mathbf{d}_N^H  \mathbf{b}(\widetilde{\theta},\widetilde{\phi})}{ \mathbf{b}^H(\widetilde{\theta},\widetilde{\phi})\mathbf{d}_N\mathbf{d}_N^H  \mathbf{b}( {\theta}, {\phi}) }  =
&\frac{\mathbf{b}^H(\widetilde{\theta},\widetilde{\phi})\mathbf{D}_{N-1}^H  \mathbf{D}_{N-1} \mathbf{b}(\widetilde{\theta},\widetilde{\phi})}{\mathbf{b}^H(\widetilde{\theta},\widetilde{\phi})\mathbf{D}_{N-1}^H  \mathbf{D}_{N-1} \mathbf{b}(\theta,\phi)  }\notag \\
\end{align}
\end{proposition}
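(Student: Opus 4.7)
The plan is to introduce shorthand for the key inner products so that the claim reduces to an algebraic identity, and then to recognise the resulting numerator as a manifest square. Write $\mathbf{b}=\mathbf{b}(\theta,\phi)$, $\widetilde{\mathbf{b}}=\mathbf{b}(\widetilde{\theta},\widetilde{\phi})$, and set
\[
\mathbf{u}=\mathbf{D}_{N-1}\mathbf{b},\quad \mathbf{v}=\mathbf{D}_{N-1}\widetilde{\mathbf{b}},\quad \alpha=\mathbf{d}_N^H\mathbf{b},\quad \beta=\mathbf{d}_N^H\widetilde{\mathbf{b}}.
\]
Because $\mathbf{D}_N$ is obtained from $\mathbf{D}_{N-1}$ by appending the row $\mathbf{d}_N^H$, I would first note the three elementary updates $\|\mathbf{D}_N\mathbf{b}\|^2=\|\mathbf{u}\|^2+|\alpha|^2$, $\|\mathbf{D}_N\widetilde{\mathbf{b}}\|^2=\|\mathbf{v}\|^2+|\beta|^2$, and $\widetilde{\mathbf{b}}^H\mathbf{D}_N^H\mathbf{D}_N\mathbf{b}=\mathbf{v}^H\mathbf{u}+\bar{\beta}\alpha$.

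Next I would substitute these into the definition of $d^2$, so that
\[
d^2(\mathbf{D}_{N-1},\cdot)=\|\mathbf{u}\|^2-\frac{|\mathbf{v}^H\mathbf{u}|^2}{\|\mathbf{v}\|^2},\qquad d^2(\mathbf{D}_{N},\cdot)=\|\mathbf{u}\|^2+|\alpha|^2-\frac{|\mathbf{v}^H\mathbf{u}+\bar{\beta}\alpha|^2}{\|\mathbf{v}\|^2+|\beta|^2}.
\]
Forming the difference over the common denominator $\|\mathbf{v}\|^2(\|\mathbf{v}\|^2+|\beta|^2)$, the $\|\mathbf{u}\|^2\|\mathbf{v}\|^2|\beta|^2$ term produced by the first summand cancels with the contribution coming from the $-|\mathbf{v}^H\mathbf{u}|^2$ piece, leaving a numerator of
\[
|\alpha|^2\|\mathbf{v}\|^4+|\beta|^2|\mathbf{v}^H\mathbf{u}|^2-2\|\mathbf{v}\|^2\,\Re\!\left(\bar{\alpha}\beta\,\mathbf{v}^H\mathbf{u}\right).
\]

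The key observation is that this expression is exactly $\bigl|\alpha\|\mathbf{v}\|^2-\beta\,\mathbf{v}^H\mathbf{u}\bigr|^2$, which I would verify by expanding $|z-w|^2$ with $z=\alpha\|\mathbf{v}\|^2$ and $w=\beta\,\mathbf{v}^H\mathbf{u}$. This immediately gives $d^2(\mathbf{D}_N)\ge d^2(\mathbf{D}_{N-1})$, with equality iff $\alpha\|\mathbf{v}\|^2=\beta\,\mathbf{v}^H\mathbf{u}$. Rewriting this last relation as $\beta/\alpha=\|\mathbf{v}\|^2/(\mathbf{v}^H\mathbf{u})$ and multiplying numerator and denominator on the left by $\bar{\beta}$ (so that the left-hand side becomes $|\beta|^2/(\bar{\beta}\alpha)=\widetilde{\mathbf{b}}^H\mathbf{d}_N\mathbf{d}_N^H\widetilde{\mathbf{b}}/\widetilde{\mathbf{b}}^H\mathbf{d}_N\mathbf{d}_N^H\mathbf{b}$) produces the exact form stated in the proposition.

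The main obstacle is the bookkeeping in the cancellation step: one has to check carefully that the cross-term $2\|\mathbf{v}\|^2\Re(\bar{\alpha}\beta\,\mathbf{v}^H\mathbf{u})$ appears with exactly the right sign, because after using the conjugate-symmetry identity $\Re(\bar{\alpha}\beta\,\mathbf{v}^H\mathbf{u})=\Re(\alpha\bar{\beta}\,\mathbf{u}^H\mathbf{v})$ the sum-of-squares structure is not visible until one chooses the pair $(z,w)$ correctly. Once that identification is made, the rest of the argument—including the equality characterisation—is essentially a one-line conclusion.
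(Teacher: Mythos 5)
Your proposal is correct and follows essentially the same route as the paper's Appendix D: expand the rank-one update $\mathbf{D}_N^H\mathbf{D}_N=\mathbf{D}_{N-1}^H\mathbf{D}_{N-1}+\mathbf{d}_N\mathbf{d}_N^H$, put the difference over a common denominator, and recognize the numerator as a perfect square whose vanishing gives the stated equality condition (the paper first rewrites $|\alpha|^2$ as $|\bar{\beta}\alpha|^2/|\beta|^2$ to reach the symmetric form $|\check{a}\check{d}-\check{b}\check{c}|^2/(\check{a}\check{c}(\check{a}+\check{c}))$, which differs from your $\bigl|\alpha\|\mathbf{v}\|^2-\beta\,\mathbf{v}^H\mathbf{u}\bigr|^2$ only by a factor of $|\beta|^2$ in numerator and denominator). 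The only caveat, shared with the paper, is that the final rewriting into the ratio form of the proposition implicitly assumes $\mathbf{d}_N^H\mathbf{b}(\widetilde{\theta},\widetilde{\phi})\neq 0$ and $\mathbf{v}^H\mathbf{u}\neq 0$.
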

\begin{proof}
See Appendix C.
\end{proof}

To verify Proposition 1, we plot the contour of $g(\theta,\phi)$ with different training lengths  in noiseless scenario in \figref{Contour}.  We set $\delta=1, \theta =0, \phi = 0$.  As can be seen that the gap between the first and the second peaks increases over training length, and the value of which is given in Table I. In addition, we can find that position of the first peak is invariant to training length and remains the same as the actual AoA, AoD pair, while position of the second peak varies. This verifies the uniqueness of ML based joint AoA, AoD estimation.

\begin{remark}
According to Proposition 1, with random beamforming, the PEP probability of an erroneous estimate $(\widetilde{\theta},\widetilde{\phi})$ being mistaken as the authentic parameters $(\theta,\phi)$ decreases almost surely over  training length $N$. Therefore, an appropriate $N$ can guarantee a satisfying accuracy of parameter estimation in scenarios with different SNR and interference levels.
\end{remark}

\section{Interplay Between Positioning and Beam Training}

In IRSs assisted mmWave MIMO system, BS/AP and IRSs, with their positions and array directions being known by all the MTs, can be seen as anchor nodes or
beacons. The AoDs derived at beam training stage enable  MT to estimate its own position. Hence, IRSs assisted mmWave MIMO system is endowed with the capability of high-accuracy localization. The acquired position information is not only a fringe benefit, but also in turn facilitates beam training. The interplay between beam training and indoor positioning is explained as follows.   AoD estimate of the unblocked reliable links can yield the position of MT, and the position of MT, associated with anchor positions and anchor directions, can improve the precision of AoD/AoA estimation and assist in the decision of blockage indicator $\zeta$.

% can increase the accuracy of AoA, AoD estimation of unblocked links and contribute to the estimation of path gain $\delta$ and blockage indicator $\zeta$.

\subsection{Reliability of The Estimated AoA, AoD Pair $(\hat{\theta}, \hat{\phi})$}
To be concise, we treat BS/AP and IRSs as identical anchor nodes. The $\eta = 1$-st anchor is  BS/AP and the rest $N_{IRS}$ anchors  ($\eta = 2, 3, \cdots, N_{IRS}+1$) are IRSs. Although we have already obtained $N_{IRS} + 1$ sets of path parameters $(\hat{\delta}_\eta, \hat{\theta}_\eta, \hat{\phi}_\eta)$, we should be aware that the estimation is performed under the assumption that $\zeta_\eta = 1$. In practice, LoS and VLoS paths may suffer from blockage (namely $\zeta_\eta = 0$) by moving obstacles, which will jeopardize the estimation of $(\hat{\delta}_\eta, \hat{\theta}_\eta, \hat{\phi}_\eta)$.  Other than blockage, insufficient training length or low SNR may incur Error Type 2  of joint AoA and AoD estimation, which is defined in Section IV. C.

Therefore, it is essential to select the trustworthy parameters as the input of  positioning algorithm. To this end, we introduce the metric -- residual signal power ratio $\varpi_{\eta}$,  to measure the reliability of $(\hat{\delta}_\eta, \hat{\theta}_\eta, \hat{\phi}_\eta)$, i.e.,
\begin{align}
\varpi_{\eta} = \frac{\|\mathbf{y}_\eta - \hat{\delta}_\eta \mathbf{D} \mathbf{b}(\hat{\theta}_\eta, \hat{\phi}_\eta)  \|_2^2}{\| \mathbf{y}_\eta\|_2^2}
\end{align}
Recall that  $(\hat\delta_\eta, \hat\theta_\eta, \hat\phi_\eta)$   are obtained by minimizing $\|\mathbf{y}_{\eta} - \delta_\eta \mathbf{D} \mathbf{b}(\theta_\eta, \phi_\eta) \|_2^2$, the yielded estimate $(\hat{\delta}_\eta, \hat{\theta}_\eta, \hat{\phi}_\eta)$ will thus always result in  $\|\mathbf{y}_{\eta} - \delta_\eta \mathbf{D} \mathbf{b}(\theta_\eta, \phi_\eta) \|_2^2 \leq \|\mathbf{y}_{\eta}\|_2^2$. Therefore,  the range of $\varpi_{\eta}$ is $\varpi_{\eta} \in [0, 1]$.

Since the dominant component of mmWave channel is LoS path, the reconstructed signal $\hat{\delta}_\eta \mathbf{D} \mathbf{b}(\hat{\theta}_\eta, \hat{\phi}_\eta)$  should account for the majority of the received signal $\mathbf{y}$ given that the parameters $(\hat{\delta}_\eta, \hat{\theta}_\eta, \hat{\phi}_\eta)$ are accurate and residual signal power ratio $\varpi_{\eta}$ should be smaller. Conversely, when blockage or Error Type 2 occurs, the parameters $(\hat\delta_\eta, \hat\theta_\eta, \hat\phi_\eta)$  are heavily biased, and thus $\varpi_{\eta}$ should be larger. Following the above heuristics, anchors' reliability can be sorted.

\subsection{AoD Based Positioning}

% The Angle of Arrival measurement technique determines the location based on the direction of arrival of the mobile station signal in base station, by using directive antennas or antennas array. The minimum number of BS¡¯s needed for the location process is two, which makes AOA a very attractive method in situations where detect ability of multiple BS signal is a problem. Another advantage is that the AOA method does not need synchronization with the BS clock

% XXXXXXXXXXXXXXXXXXXXXXXXXXXXXX

% XX advantages of angle based positioning XX

% Angle based positioning technique is a XXX. It is advantageous compared with time XX, as synchronization is unnecessary.

\subsubsection{Geometric Relationship Between AoDs and MT Position}
We denote the index set of the reliable links as $\mathcal{N}$, position coordinates of the $\eta$-th anchor  as $\mathbf{p}_\eta$,  ULA direction of the $\eta$-th anchor as  $\mathbf{e}_\eta$. Note that $\mathbf{p}_\eta,   \mathbf{e}_\eta$ are known by MTs. The direction vector of the LoS path between MT and the $\eta$-th anchor is $\frac{\mathbf{p} - \mathbf{p}_{\eta}}{\| \mathbf{p} - \mathbf{p}_\eta\|_2}$, where $\mathbf{p}$ is the position of MT. Thus, the geometric relationship between AoDs and MT position is expressed as
\begin{align}
 \hat{\phi}_\eta = \underbrace{\frac{(\mathbf{p} - \mathbf{p}_{\eta})^T \mathbf{e}_\eta}{\| \mathbf{p} - \mathbf{p}_\eta\|_2}}_{\phi_\eta(\mathbf{p})} + \varepsilon_\eta, \;\; \eta \in \mathcal{N} \label{MeaAoD}
\end{align}
where $ \hat{\phi}_\eta$ is the estimate of cosine AoD of the $\eta$-th   link derived in beam training stage, $ \phi_\eta(\mathbf{p})$ is the actual cosine AoD that is dependent on position $\mathbf{p}$, and $\varepsilon_\eta$ is estimation error. For illustrative purposes, a typical scenario of IRSs assisted mmWave communications is shown in \figref{ScenarioPositioning1}.

\subsubsection{Taylor Series Method for AoD Based Positioning}
In the ideal case, when $\varepsilon_\eta = 0$, we have $\hat{\phi}_\eta =  \phi_\eta(\mathbf{p})$. The equation $\phi_\eta(\mathbf{p}) =  \frac{(\mathbf{p} - \mathbf{p}_{\eta})^T \mathbf{e}_\eta}{\| \mathbf{p} - \mathbf{p}_\eta\|_2} $ corresponds to a right circular cone. There are $3$ unknown variables of MT's position coordinates, thus the minimum sufficient number of unblocked links to estimate the 3-D position of MT is $|\mathcal{N}|= 3$, which is the intersection of the three right circular cones. As IRSs are cost-effective compared with conventional mmWave devices, they can be massively installed with minimal effort. We can expect that IRSs assisted mmWave with a large number of delicately placed IRSs is capable to guarantee $|\mathcal{N}| \geq 3$ unblocked links with high probability.

In practice,  estimation error $\varepsilon_\eta$ cannot be zero.
To estimate the 3-D position $\mathbf{p} = (x,y,z)^T$, least square criterion is adopted, i.e.,
\begin{align} \label{OptRev1}
\begin{split}
&\min_{\mathbf{p}} \;\;  \xi_{\phi}(\mathbf{p}) \triangleq \sum_{\eta\in \mathcal{N}}  \left(\hat{\phi}_\eta  - \phi_\eta(\mathbf{p}) \right)^2   \\
& s.t. \;\;\;\;   \mathbf{p} \in \mathcal{S}
\end{split}
\end{align}
where $\mathcal{S}$ is the position range of indoor MT, e.g., the 3-D space of lecture hall. As the objective function $\xi_{\phi}(\mathbf{p})$ is non-convex, it is non-trivial to derive the analytical solution to the problem. Fortunately, Taylor-series estimation method is capable to effectively solve a large class of position-location problems\cite{foy1976position}. Starting with a rough initial guess, the Taylor-series estimation method iteratively improves its guess at each step by determining the local linear least-sum-squared-error correction\cite{foy1976position}. In AoD based positioning, with an initial position guess $\hat{\mathbf{p}}$,  the following approximation can be obtained through Taylor series expansion by neglecting $m$-th order terms ($m\geq 2$), i.e.,
\begin{align}
 \phi_\eta(\mathbf{p}) \approx \phi_\eta(\hat{\mathbf{p}}) + (\mathbf{p} - \hat{\mathbf{p}})^T \frac{\partial \phi_\eta({\mathbf{p}})}{\partial {\mathbf{p}}}\bigg|_{\mathbf{p} = \hat{\mathbf{p}}} \label{Approx1}
\end{align}
where the first order derivative is denoted as
\begin{align}
 \frac{\partial \phi_\eta(\mathbf{p})}{\partial \mathbf{p}}  =  \frac{\| \mathbf{p} - \mathbf{p}_\eta\|_2 \mathbf{e}_\eta -  (\mathbf{p} - \mathbf{p}_\eta)^T\mathbf{e}_\eta \frac{ \mathbf{p} - \mathbf{p}_\eta}{\|  \mathbf{p} - \mathbf{p}_\eta\|_2} }{\| \mathbf{p} - \mathbf{p}_\eta\|_2^2}
\end{align}
Substituting \eqref{Approx1} into \eqref{MeaAoD}, we have
\begin{align}
\hat{\phi}_\eta - \phi_\eta(\hat{\mathbf{p}}) \approx  \frac{\partial \phi_\eta({\mathbf{p}})}{\partial {\mathbf{p}}^T}\big|_{\mathbf{p} = \hat{\mathbf{p}}} (\mathbf{p} - \hat{\mathbf{p}}) + \varepsilon_\eta, \;\; \eta \in \mathcal{N}
\end{align}
Its matrix form is written as
\begin{align}
\Delta_{\boldsymbol{\phi}} \approx \mathbf{A}^T \Delta_{\mathbf{p}}  + \boldsymbol{\varepsilon} \label{Approx}
\end{align}
where  $ \Delta_{\mathbf{p}}=   \mathbf{p}-\hat{\mathbf{p}}$, $\boldsymbol{\varepsilon} = [\varepsilon_1, \cdots, \varepsilon_{|\mathcal{N}|}]^T$, and
\begin{subequations}
\begin{align}
 \Delta_{\boldsymbol{\phi}} = [ \hat{\phi}_1 - \phi_1(\hat{\mathbf{p}}), \cdots, \hat{\phi}_{|\mathcal{N}|} - \phi_{|\mathcal{N}|}(\hat{\mathbf{p}}) ]^T  \label{MatrixA1}\\
\mathbf{A}= \left[\frac{\partial \phi_1({\mathbf{p}})}{\partial {\mathbf{p}}}\big|_{\mathbf{p} = \hat{\mathbf{p}}}, \cdots, \frac{\partial \phi_{|\mathcal{N}|}({\mathbf{p}})}{\partial {\mathbf{p}}}\big|_{\mathbf{p} = \hat{\mathbf{p}}}  \right]  \label{MatrixA2}
\end{align}
\end{subequations}
On the basis of \eqref{Approx}, the Taylor series method for AoD based positioning is summarized in Algorithm 1.

%\begin{remark}
%Taylor-series estimation method is robust to Initial guess. xxx  Define a reliable position range.
%\end{remark}

\begin{figure}[tp]{
\begin{center}{\includegraphics[ height=7cm]{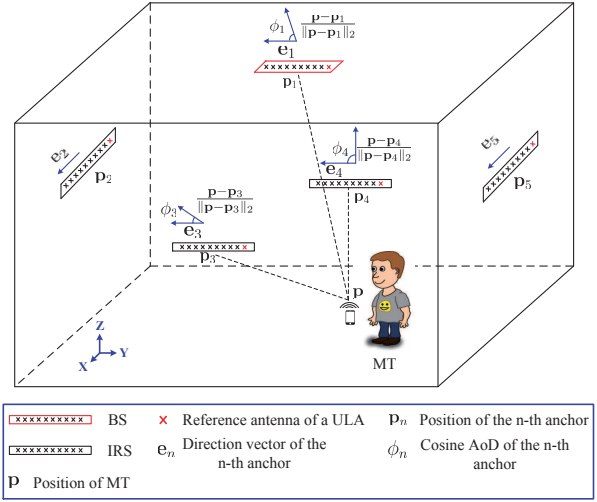}}
\caption{A typical scenario of IRS assisted mmWave communications}\label{ScenarioPositioning1}
\end{center}}
\end{figure}

\begin{algorithm}[t]

    \caption{Taylor Series Method For AoD Based Positioning}
           \textbf{Initialization}: Generate an initial guess of MT position $\hat{\mathbf{p}}$ \\
           \textbf{Input}: The estimate of cosine AoDs of a selected set of reliable links, i.e., $\hat{\phi}_\eta, (\eta \in \mathcal{N})$, positions  of  anchors $\mathbf{p}_\eta, (\eta \in \mathcal{N})$, directions of ULA on the anchors  $\mathbf{e}_\eta, (\eta \in \mathcal{N})$, and iteration stopping parameter $\epsilon$. \\
         \textbf{\emph{Repeat}} {
         \\
               \quad \emph{1.}  With the given $\hat{\mathbf{p}}$, generate $\phi_\eta(\hat{\mathbf{p}}), (\eta \in \mathcal{N})$ according to Eq. \eqref{MatrixA1}  and $\mathbf{A}$  according to Eq.\eqref{MatrixA2}.
          \\
               \quad \emph{2.}    Find the least square estimate of  $\Delta_{\mathbf{p}}$, i.e.,
            \begin{align}
            \hat{\Delta}_{\mathbf{p}} = (\mathbf{A}\mathbf{A}^T)^{-1}\mathbf{A}\Delta_{\boldsymbol{\phi}} \notag
            \end{align}
          \\
             \quad \emph{3.}      Update $\hat{\mathbf{p}}$,  i.e., $ \hat{\mathbf{p}} \leftarrow  \hat{\mathbf{p}}  + \hat{\Delta}_{\mathbf{p}}$.}\\
      \textbf{\emph{Until}} $\|\hat{\Delta}_{\mathbf{p}}\|_2 <\epsilon$.
 \end{algorithm}

\subsubsection{Reliable Link Set $\mathcal{N}$}

An intuitive method to construct the set of reliable links is to select $|\mathcal{N}|$ links with the  $|\mathcal{N}|$ smallest $\varpi_\eta$ to avoid unreliable AoDs resulted from blockage and Error Type 2 of joint AoA, AoD estimation. However, it is non-trivial to determine the exact value of $|\mathcal{N}|$.
Although $|\mathcal{N}| = 3$ anchors are theoretically sufficient to yield the position of MT in the ideal noiseless case, more anchors are desirable in practice for positioning algorithm to enhance the accuracy of  position estimation.

To utilize as many reliable anchors as possible, the following strategy is proposed to iteratively construct the reliable link set $\mathcal{N}$.
Firstly, we sort the anchors in ascending order according to residual signal power ratio $\varpi_\eta$. Then, starting from $|\mathcal{N}| = 3$ anchors, we iteratively increase the number of anchors used for positioning in Algorithm 1, and by the end of each iteration, we calculate {  the cost $ {\frac{\xi_{\phi}(\mathbf{p})}{|\mathcal{N}|}}$, where $\xi_{\phi}(\mathbf{p})$ is the squared error of least square method in Eq. \eqref{OptRev1} and $|\mathcal{N}|$ is the number of selected anchors}. Finally, we select the output corresponding to the largest $|\mathcal{N}|$ that satisfies { $ {\frac{\xi_{\phi}(\mathbf{p})}{|\mathcal{N}|}} \leq \xi_{th}$ } as the estimated position of MT, where $\xi_{th}$ is a preset threshold\footnotemark.

\footnotetext{An appropriate $\xi_{th}$ can be obtained by carrying out a great number of Monte Carlo experiments offline. In our numerical experiment, we find that $\sqrt{\xi_{th}} = 0.005$ results in a good performance. }

\subsection{Parameter Estimation With The Aid of MT Position }
With the estimated position $\hat{\mathbf{p}}$, channel parameters can be refined according to the geometric relationship.

\subsubsection{AoD Refinement}
With $\hat{\mathbf{p}}$, AoD estimation is updated by
\begin{align}
{\phi}_\eta^\star = \frac{(\hat{\mathbf{p}} - \mathbf{p}_\eta)^T\mathbf{e}_\eta}{\| \hat{\mathbf{p}} - \mathbf{p}_\eta\|_2},\;\; \eta\in \{1, 2, \cdots, N_{IRS}+1 \}
\end{align}

\subsubsection{AoA Refinement}
%xxxx hand-held mobile terminals. XX

To estimate AoA, the direction of ULA in MT's side  is essential. Therefore, we firstly find the least square estimate of $\mathbf{e}_{MT}$ by solving the following optimization problem.
\begin{align} \label{Direction}
\begin{split}
& \min_{\mathbf{e}_{MT}}\;\; \xi_{\theta}(\mathbf{e}_{MT}) \triangleq \sum_{\eta \in \mathcal{N}} \left(\frac{(\hat{\mathbf{p}} - \mathbf{p}_{\eta})^T \mathbf{e}_{MT}}{\| \hat{\mathbf{p}} - \mathbf{p}_\eta\|_2} \ominus  \hat{\theta}_\eta \right)^2   \\
&s.t. \;\;\;\; \| \mathbf{e}_{MT} \|_2 = 1
\end{split}
\end{align}
Note that $\mathcal{N}$ can be derived in the iterative process according to Section V. A. 3.

The objective function of \eqref{Direction} can be rewritten in matrix form as
\begin{align}
\xi_{\theta}(\mathbf{e}_{MT}) = \| \mathbf{P}^T \mathbf{e}_{MT} \ominus \hat{\boldsymbol{\theta}} \|_2^2
\end{align}
where  $\mathbf{P} = \left[\frac{\hat{\mathbf{p}} - \mathbf{p}_{ {\eta}_1} }{\| \hat{\mathbf{p}} - \mathbf{p}_{{\eta}_{1}}\|_2} \cdots \frac{\hat{\mathbf{p}} - \mathbf{p}_{ {\eta}_{|\mathcal{N}|}}}{\| \hat{\mathbf{p}} - \mathbf{p}_{{\eta}_{|\mathcal{N}|}}\|_2} \right]$, $\hat{\boldsymbol{\theta}} = [\hat{\theta}_{{\eta}_1}, \cdots, \hat{\theta}_{{\eta}_{|\mathcal{N}|}}]^T$
and $\mathcal{N} = \{\eta_1, \cdots, \eta_{|\mathcal{N}|} \}$.
The optimization problem can be solved via projected gradient descent method  \cite{ProjGradient}, in which we iteratively update $\mathbf{e}_{MT}$ as follows.
\begin{align}
\begin{split}
&\mathbf{d}_{MT,i+1} =  \mathbf{e}_{MT,i}  - \lambda \frac{\partial \xi_{\theta}( \mathbf{e}_{MT}) }{\partial\mathbf{e}_{MT}}\Big|_{\mathbf{e}_{MT} = \mathbf{e}_{MT,i}} \\
&\mathbf{e}_{MT,i+1} = \frac{\mathbf{d}_{MT,i+1}}{\|\mathbf{d}_{MT,i+1} \|_2}
\end{split}
\end{align}
where $\lambda$ is step size and $\frac{\partial \xi_{\theta}(\mathbf{e}_{MT}) }{\partial \mathbf{e}_{MT}} = \mathbf{P} \left( \mathbf{P}^T \mathbf{e}_{MT} \ominus \hat{\boldsymbol{\theta}} \right)$.

Finally, with $\hat{\mathbf{e}}_{MT}$  yielded by projected gradient descent method, AoA estimation is updated by
\begin{align}
{\theta}_\eta^\star = \frac{(\hat{\mathbf{p}} - \mathbf{p}_\eta)^T\hat{\mathbf{e}}_{MT}}{\| \hat{\mathbf{p}} - \mathbf{p}_\eta\|_2}
\end{align}

\subsubsection{Estimation of Blockage}

As a prerequisite of our proposed blockage estimation method, we firstly introduce the estimation of $\delta_\eta$, which is  dependent on the values of $(\theta_\eta, \phi_\eta)$. Note that the parameter estimate obtained in Section IV by ML estimation is under the assumption that $\zeta_\eta = 1$, while it is probable that $\zeta_\eta = 0$ in fact. It would be misleading in the estimation of $\delta_\eta$ by directly substituting $(\hat\theta_\eta, \hat\phi_\eta)$ into \eqref{OptDelta}. Therefore, we will use the estimates of AoA and AoD  refined by position to assist the estimation of $\delta_\eta$ and $\zeta_\eta$, as they are cross verified by multiple anchors and are thus more reliable.

Substituting $({\theta}_\eta^\star,  {\phi}_\eta^\star)$ into \eqref{OptDelta}, we have
\begin{align}
 {\delta}_\eta^\star  & =   \frac{\mathbf{b}^H( \theta_\eta^\star,  \phi_\eta^\star)  \mathbf{D}^H  \mathbf{y}}{\| \mathbf{D} \mathbf{b}( \theta_\eta^\star,  \phi_\eta^\star) \|_2^2} \notag \\
& =   \frac{\zeta_\eta \delta_\eta\mathbf{b}^H(\theta_\eta^\star,  \phi_\eta^\star)  \mathbf{D}^H  \mathbf{D} \mathbf{b}(\theta_\eta,\phi_\eta) + {\mathbf{b}^H(\theta_\eta^\star,  \phi_\eta^\star)  \mathbf{D}^H  \mathbf{n}} }{\| \mathbf{D} \mathbf{b}(\theta^\star, \phi^\star) \|_2^2} \notag \\
& = \zeta_\eta \delta_\eta f({\theta}_\eta^\star, {\phi}_\eta^\star) + \bar{n}
\end{align}
where $f(\theta_\eta^\star, \phi_\eta^\star) \triangleq  \frac{\mathbf{b}^H(\theta_\eta^\star, \phi_\eta^\star)  \mathbf{D}^H  \mathbf{D} \mathbf{b}(\theta_\eta,\phi_\eta)}{\| \mathbf{D} \mathbf{b}(\theta_\eta^\star, \phi_\eta^\star) \|_2^2}$, $\bar{n} \sim \mathcal{CN}(0,  \sigma^2_{\bar{n}})$, and $\sigma^2_{\bar{n}} = \frac{\sigma^2_{\bar{\mathbf{w}}} + \sigma^2_{\boldsymbol{\nu}}}{\| \mathbf{D} \mathbf{b}(\theta_\eta^\star, \phi_\eta^\star) \|_2^2}$   (or $\sigma^2_{\bar{n}} =\frac{\sigma^2_{\bar{\mathbf{w}}} + \sigma^2_{\boldsymbol{\nu}_1} + \sigma^2_{\boldsymbol{\nu}_2}}{\| \mathbf{D} \mathbf{b}(\theta_\eta^\star, \phi_\eta^\star) \|_2^2}$). Thus, we have
\begin{align}
{\delta}_\eta^\star  = \left \{ \begin{array}{cc}
                             \delta_\eta f({\theta}_\eta^\star, {\phi}_\eta^\star)  + \bar{n}, & \;\; \zeta_\eta=1 \\
                           \bar{n}, & \;\; \zeta_\eta = 0
                         \end{array} \right.
\end{align}

Theoretically, with the knowledge of $\delta_\eta$, $f({\theta}_\eta^\star, {\phi}_\eta^\star)$ and $\sigma^2_{\bar{n}}$, the decision of $\zeta_\eta$ can be made by comparing the probabilities of ${\delta}_\eta^\star$ conditioned on $\zeta_\eta = 0 $ and $\zeta_\eta = 1$. However, accurate estimation of $f({\theta}_\eta^\star, {\phi}_\eta^\star)$ and $\sigma^2_{\bar{n}}$ is challenging in practice. With respect to $\delta_\eta$, its amplitude $|\delta_\eta|$ is estimable from the distance of MT, while its phase cannot be accurately estimated from the distance, as it is very sensitive to distance estimation error and may be affected by random initial phase of local oscillator in transmitter side.

Alternatively, a heuristic method is proposed to decide blockage indicator by comparing the pathloss estimated from $(\theta^\star_\eta, \phi^\star_\eta)$ and  pathloss estimated from $\hat{\mathbf{p}}$, i.e.,
\begin{align} \label{ZetaDec}
\left|10 \log_{10} \frac{1}{|{{\delta}}_\eta^\star|^2} - 10 \log_{10} \frac{1}{|{\delta}_\eta(\hat{\mathbf{p}})|^2} \right| \begin{array}{c}
                                                            \zeta_\eta^\star = 1 \\
                                                            \lesseqgtr \\
                                                            \zeta_\eta^\star = 0
                                                          \end{array}
 PL_{th}
\end{align}
where
\begin{align}
&|{\delta}_\eta(\hat{\mathbf{p}})| = \left\{ \begin{array}{c}
                                  \left|\frac{\sqrt{P_{Tx}} \lambda e^{-j 2\pi d_{BM}}}{4\pi d_{BM}}\right|,\qquad \qquad  \;\; \eta = 1  \notag \\
                                  \left|\frac{ \sqrt{\xi P_{Tx} N_{B}}\lambda e^{-j 2\pi (d_{BR_\eta} + d_{R_\eta M}) }}{4\pi(d_{BR_\eta} + d_{R_\eta M})}\right|,  \eta = 2,\cdots, N_{IRS + 1}
                               \end{array} \right. \notag
\end{align}
BS/AP to MT distance $d_{BM}$ and IRS to MT distance $d_{R_\eta M}$ are attainable from $\hat{\mathbf{p}}$, and $ PL_{th}$ is the preset threshold of pathloss distance (In numerical simulations, we set $PL_{th} = 6$ dB ).

\section{Numerical Results}
In this section, we numerically study the performance of the proposed joint beam training and positioning scheme for IRSs assisted mmWave MIMO.

\subsection{Settings of Numerical Experiment}

\begin{table}[tp] \centering
\noindent
\caption{Simulation Parameters} \setlength{\belowcaptionskip}{0cm} \label{T1}
\begin{tabular}{ll}
\hline
   \textbf{Parameter}    &    \textbf{Value}   \\
\hline
    Operating frequency     &     $28$ GHz     \\
    %Tx power of BS/AP     &     $[0,\; 30]$ dBm     \\
    Noise power      &     $-84$ dBm      \\
    Position of IRSs     &    \tabincell{c}{$(5,\;-10,\; 3.5)$, $(5,\;10,\; 3.5)$,\quad \\  $(0,\;-10,\; 3.5)$,  $(0,\;10,\; 3.5)$, \\ $(-5,\;-10,\; 3.5)$,  $(-5,\;10,\; 3.5)$, \\ $(-10,\;5,\; 3.5)$,  $(10,\;5,\; 3.5)$, \\  $(-10,\;0,\; 3.5)$,  $(10,\;0,\; 3.5)$, \\  $(-10,\;-5,\; 3.5)$,  $(10,\;-5,\; 3.5)$} \\
    Position of BS/AP     &    $(0,\;0,\;5)$       \\
    Direction of  IRSs' ULA     &    \tabincell{c}{$(0,\;0,\; 1)$, $(1,\;0,\; 0)$, $(0,\;1,\; 0)$,  \\  $(0,\;1,\; 0)$, $(0,\;0,\; 1)$, $(1,\;0,\; 0)$,  \\ $(0,\;1,\; 0)$,  $(0,\;1,\; 0)$,  $(0,\;0,\; 1)$, \\ $(1,\;0,\; 0)$, $(0,\;1,\; 0)$,  $(0,\;1,\; 0)$     \;}       \\
    Direction of   BS/AP's ULA     &     $(\frac{\sqrt{2}}{2},\;\frac{\sqrt{2}}{2},\; 0)$      \\
    Reflection loss $-10\log_{10} \xi $    &   \;   $13$ dB   \\
    Size of obstacles     &       $0.6  \times 0.4 \times  1.7$  meters  \\
    Altitude of MT     &       $[1.2,\; 1.4]$  meters  \\
    Number of users     &        $ 20, 50, 100  $\\
    Number of NLoS paths      &        $0$ {  (only in \figref{MSEComp})},  $4$\\
    {  Number of antennas in BS/AP ($N_B$)} & $16$  \\
    {  Number of antennas in MT ($N_M$)}  & $16$ \\
    {  Number of reflectors in IRS ($N_R$)} & $16$\\
\hline
\end{tabular}
\end{table}
% XXX Explain why the direction vector is set as this. Horizontally .

We assume that  IRSs-assisted mmWave MIMO system is deployed in an indoor scenario, e.g., lecture hall, and the length, width and height of which are $20$ meters, $20$ meters and $5$ meters, respectively.  The rest system parameters are listed in Table \ref{T1}. For simplicity, we assume that AoA, AoD of  NLoS paths follow uniform distribution, i.e., $\theta_{BM,l}, \phi_{BM,l} \sim U(0, 2\pi), l=2,...,L$, and path coefficient follows complex Gaussian distribution, i.e.,  $\delta_{l} \sim \mathcal{CN}(0, \sigma_l^2), l=2,...,L $ and $10 \log_{10} \frac{\delta_{1}^2}{\sigma_l^2} = 20$ dB. We model user (MT holder) as a cube with its length, width and height being $0.6$m, $0.4$m and $1.7$m, respectively. We denote  position of the MT held by user as $(x,y,z)$, where  $x, y, z$ follow  uniform distribution, i.e.,  $x,y \sim U(-10,10)$ and $z \sim U(1.2,1.4)$. Users are uniformly distributed in the lecture hall under the non-overlapping constraint. For a typical MT, the other MT holders are its potential obstacles, and thus the blockage probability increases with user density.

\subsection{Relationship Between User Density and Blockage Probability}
To gain insights into the relationship between user density and blockage probability, \figref{PlotObstaclesBlockage} is presented where there are $12$ IRSs deployed, which means a total of $13$ LoS/VLoS links are available. From the \figref{PlotObstaclesBlockage}, we can see that when the number of MTs is $20$, more than $50\%$ of channel realizations experience no link blockage, the largest number of blocked links is $4$, and the percentage of which is less than $5\%$; when the number of MTs is $50$, more than $80\%$ of channel realizations experience less than $3$ blocked links, the largest number of blocked links is $7$, and the percentage of which is less than $1\%$; when the number of MTs is $100$,  more than $80\%$ of channel realizations experience less than $5$ blocked links,  the largest number of blocked links is $9$, and the percentage of which is almost negligible. Note that when there exists at least $1$ unblocked link, uninterrupted communication over mmWave band can be guaranteed, and when there exist at least $3$ unblocked links, positioning algorithm can be performed to locate MT and meanwhile enhance parameter estimation.

% Therefore, the direction of incident LoS signal from BS/AP to the $i$-th IRS can be derived as
%\begin{align}
%\mathbf{d}_{B,{R_i}} = \frac{P_{B} - P_{R_i}}{\| P_{B} - P_{R_i} \|_2} \notag
%\end{align}
%ULA  of BS/AP is placed along the direction $\mathbf{v}_B = (1,\;0,\; 0)$, ULAs of four IRSs are placed along the directions $\mathbf{v}_{R_1}= (1,\;0,\; 0)$, $\mathbf{v}_{R_2}=(1,\;0,\; 0)$, $\mathbf{v}_{R_3}=(0,\;1,\; 0)$, and $\mathbf{v}_{R_4}=(0,\;1,\; 0)$. With the aforementioned location and direction information, AoA and AoD of the LoS path from BS/AP to the $i$th IRS can be derived as
%\begin{align}
%\theta_{R_i,B,1} & = \mathbf{v}_{R_i}^T \mathbf{d}_{B,{R_i}} \notag \\
%\phi_{R_i,B,1} & = \mathbf{v}_B^T \mathbf{d}_{B,{R_i}} \notag
%\end{align}
% Likewise, AoA/AoD of the LoS path between MT and BS/AP, and AoAs/AoDs of the LoS paths between MT and IRS can be obtained.

\begin{figure}[tp]{
\begin{center}{\includegraphics[ height=3cm]{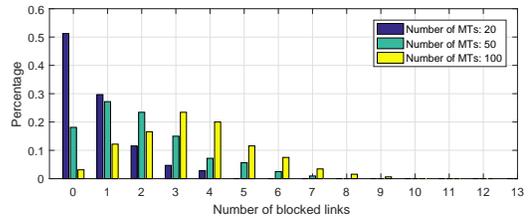}}
\caption{Blockage rate with different user densities}\label{PlotObstaclesBlockage}
\end{center}}
\end{figure}

\subsection{Performance of Beam Training with Random Beamforming}

{ 
As the performance of  joint beam training and positioning is fundamentally determined by the decomposed Sub-problem 1 for BS/AP-MT link and Sub-problem 2 for BS/AP-IRS-MT links, whose unified signal model is Eq. \eqref{UniModel}, we start numerical evaluation from the sub-problems, i.e., the beam training scheme with random beamforming proposed in Section IV. The blockage indicator $\zeta$ of Eq. \eqref{UniModel} is set as $\zeta = 1$, and the random variable $\mathbf{n}= \mathbf{w} + \boldsymbol{\nu}$, where $\mathbf{w}$ is the noise term and $\boldsymbol{\nu}$ is the interference term. The noise term  $\mathbf{w} \sim \mathcal{CN}(\mathbf{0}, \sigma_\mathbf{w}^2 \mathbf{I})$ and $\sigma_\mathbf{w}^2$ is $-86$ dBm according to Table II. The interference term $\boldsymbol{\nu}$ is propagated via NLoS paths, and its entries are represented in Eq. \eqref{MeasureBS} for BS/AP-MT link and in Eq. \eqref{Sysmodel2} for BS/AP-IRS-MT links. A notable difference between  $\boldsymbol{\nu}$ and $\mathbf{w}$ is that the power of $\boldsymbol{\nu}$ is proportional to transmit power. Since Sub-problem 1 and Sub-problem 2 are mathematically equivalent, we carry out the numerical study of  beam training with random beamforming in BS/AP-MT link in this subsection.}

 \begin{figure}[tp]{
\begin{center}{\includegraphics[ height=7.8cm]{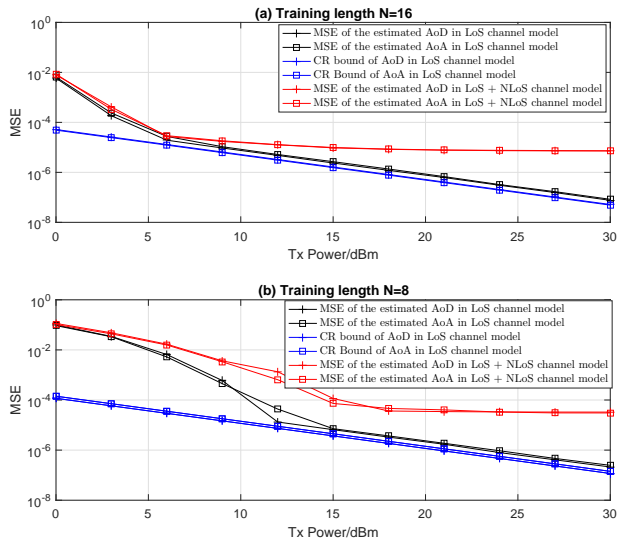}}
\caption{MSE performance of AoA/AoD estimated by random beamforming based beam training in LoS channel model and LoS + NLoS  channel model, where NLoS path number is $4$, training length is $N=8, 16$.  }\label{MSEComp}
\end{center}}
\end{figure}

{ In \figref{MSEComp}, we use mean squared error (MSE) of the estimated AoA/AoD as the performance metric, which is defined as $MSE(\hat\theta) \triangleq \mathbb{E}\left((\hat{\theta} \ominus \theta)^2\right), MSE(\hat\phi) \triangleq \mathbb{E}\left((\hat{\phi} \ominus \phi)^2\right)$, where $(\hat{\theta}, \hat{\phi})$ are the estimated AoA and AoD of the LoS path, and $({\theta}, {\phi})$ are the exact values of AoA and AoD of the LoS path.
The proposed beam training scheme is characterized by two steps, namely random beamforming and ML estimation. Random beamforming is performed to measure mmWave channel,  and ML estimation is performed to estimate AoA and AoD of the LoS path based on channel measurements. To study the accuracy of ML estimator, we use Cram\'{e}r-Rao bound\footnotemark (CRB) in the ideal LoS channel (where $\boldsymbol{\nu} = \mathbf{0}$) as the benchmark.}  It can be seen from \figref{MSEComp}(a) that, when the training length is $N = 16$, from $0$ dBm to $6$ dBm the empirical MSE of both AoA and AoD in LoS mmWave channel is significantly higher than CRB, but the performance gap gradually turns to be marginal from $6$ dBm to above. It indicates that, from $0$ dBm to $6$ dBm ML estimation of $(\theta, \phi)$ experiences Error Type 2 as mentioned in Section IV. C, in which the estimated AoA and AoD pair are far apart from their authentic values, and from $6$ dBm to above only Error Type 1 happens, in which the estimation error is mild and tightly lower bounded by CR bound. It validates the effectiveness of ML estimator in relative high SNR regimes. In practice, NLoS path's impacts on beam training cannot be overlooked.
%Different from the noise term $\mathbf{w} $, the scale of the inference term $\boldsymbol{\nu}$ is proportional to transmit power, as it is propagated via NLoS paths.
In the numerical simulation of beam training in LoS + NLoS mmWave channel, we set the number of NLoS paths as $4$. As can be seen from \figref{MSEComp}(a) that, from $0$ dBm to $6$ dBm the empirical MSE of  AoA and AoD in LoS + NLoS channel is slightly worse than that in LoS channel, which indicates that noise is the main detrimental factor. From $9$ dBm to above, the MSE curves turn to be flat, and this is because the impact of NLoS path, namely $\boldsymbol{\nu}$, does not diminish over SNR. A notable point is that MSE from $9$ dBm to above is around $10^{-5}$, which is satisfactorily accurate. To study the impact of training length, MSE performance comparison is also performed when $N=8$ in \figref{MSEComp}(b). A remarkable difference from $N=16$ case is that the flat curves of empirical MSE start from $18$ dBm, and the values of which are around $10^{-4}$, which indicates that the impact of noise  in $N=8$ case is more significant than $N=16$ case and thus verifies the benefits of increasing training length.

\footnotetext{Since the estimation of $(\theta, \phi)$ is part of the joint estimation of $(\delta, \theta, \phi)$, CRBs of $ \theta$ and $\phi$ are obtained as the last two diagonal elements of the inverse of Fisher information matrix  w.r.t. $(\delta, \theta, \phi)$. The detailed derivation of CRB is omitted, as it follows the standard procedure.}

\begin{figure}[tp]{
\begin{center}
\begin{minipage}[!h]{0.8\linewidth}
\includegraphics[ width=1.1\textwidth]{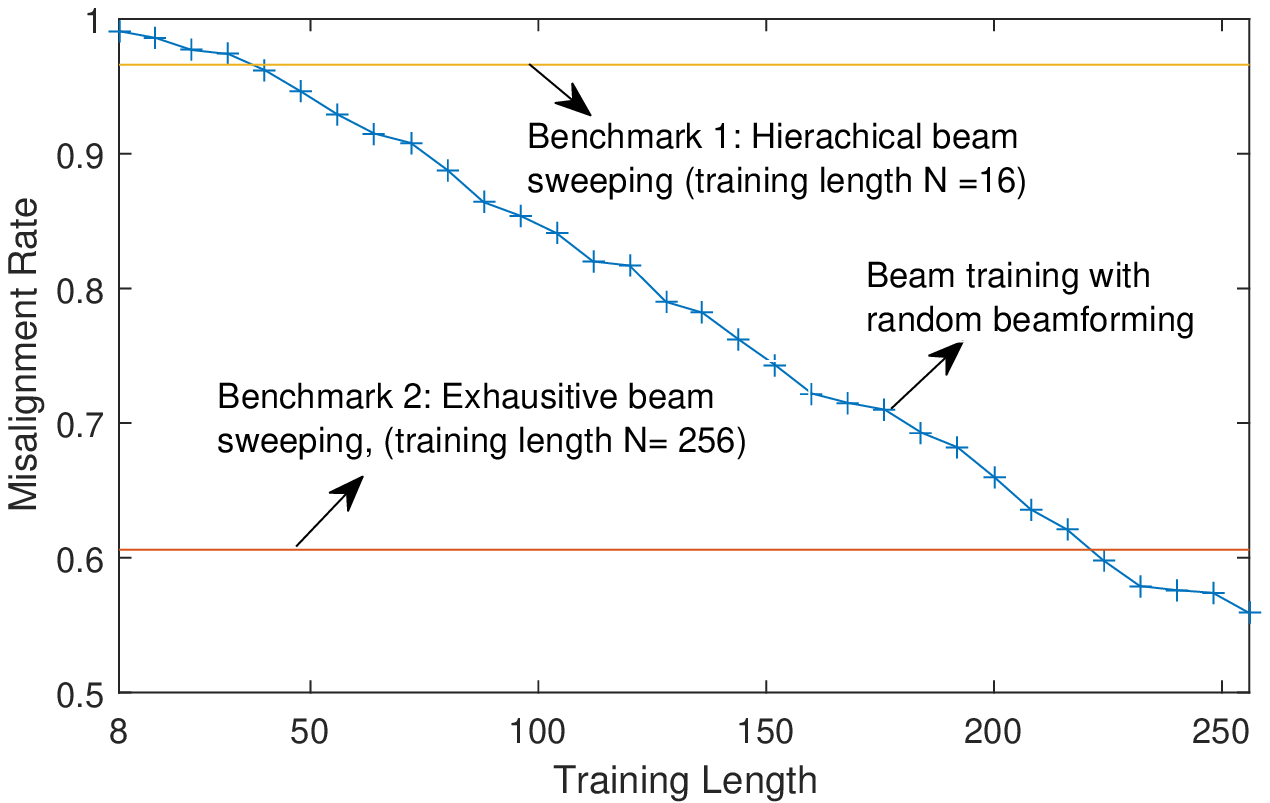}
\subcaption{When transmit power is $-20$ dBm }
\label{N4}
\end{minipage}
\begin{minipage}[!h]{0.8\linewidth}
\centering
\includegraphics[ width=1.1\textwidth]{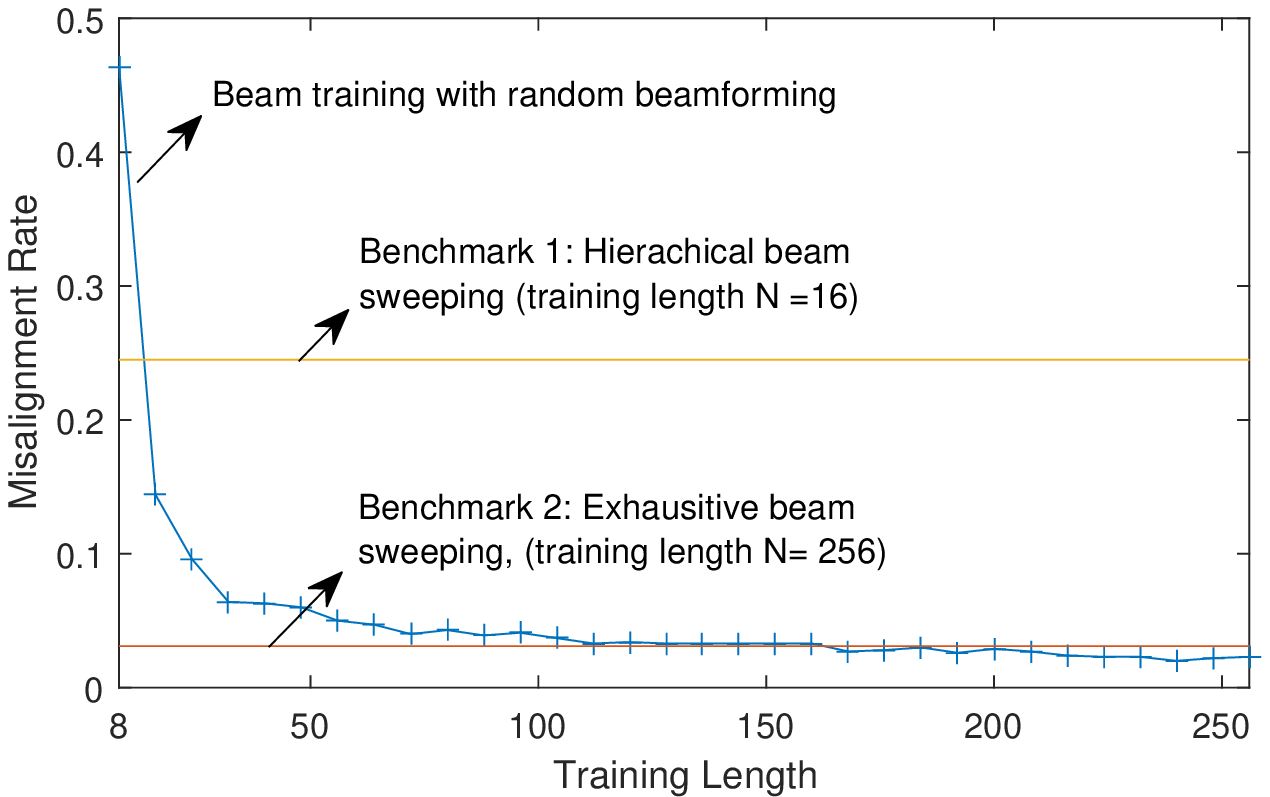}
\subcaption{When transmit power is $0$ dBm }
\label{N4}
\end{minipage}
\end{center}
\caption{Performance (misalignment rate) comparison between the proposed beam training with random beamforming and the existing beam training schemes with directional beamforming }\label{MisalignmentComp}}
\end{figure}

{ 
In \figref{MisalignmentComp}, we compare the performance of the proposed random beamforming based beam training scheme with the existing directional beamforming based beam training schemes \cite{giordani2018tutorial, Mmwave5, xiao2016hierarchical}. Directional beamforming is used for beam training in a more straightforward way than random beamforming, in which the candidate beams pairs are explored through exhaustive/hierachical beam sweeping, and then the strongest beam pair is selected based on the received power of the candidates. Directional beams are selected from a pre-configured finite set corresponding to quantized angles, e.g., discrete Fourier Transform (DFT) codebook. To compare the performance of random beamforming and directional beamforming in beam training, we use beam misalignment rate as the performance metric, which measures the probability that beam training fails to find the strongest beam pair. For random beamforming based beam training, we select the best beam pair by quantizing the estimated AoA/AoD  to its nearest codeword. Two types of directional beamforming techniques are used as the benchmarks, namely, exhaustive beam sweeping \cite{giordani2018tutorial} and hierarchical beam sweeping \cite{xiao2016hierarchical, Mmwave5}. Exhaustive beam sweeping explores all the possible beam pairs and its training length is $N=N_BN_M = 256$; Hierarchical beam sweeping iteratively narrows down the direction search region and results in logarithmic training length, i.e., $N = 4\log_2 \min(N_B, N_M) + 2 \log_2 \frac{\max(N_B, N_M)}{\min(N_B, N_M)}  = 16$. By contrast, random beamforming is flexible with training length. In the simulation, we set the training length of random beamforming as $N = {8,16,\cdots, 248, 256}$ to investigate the impact of training length. We compare the performance of random beamforming based beam training with directional beamforming based beam training at two SNR levels, i.e., $P_{Tx}= -20$ dBm, $0$ dBm, in LoS + NLoS channel model. From \figref{MisalignmentComp}(a), it can be seen that, when $P_{Tx}= -20$ dBm, the misalignment rate of exhaustive beam sweeping is $0.606$, and the misalignment rate of hierarchical beam sweeping is $0.966$. The beam misalignment rate of random beamforming is $0.991$ when training length is $N=8$,  and it decreases over training length and turns to be $0.559$ when training length is $N=256$. It verifies the conclusion of Theorem 2 and indicates that random beamforming with an appropriate training length could achieve better performance than directional beamforming. From \figref{MisalignmentComp}(b), it can be seen that, when $P_{Tx}= 0$ dBm, the misalignment rate of exhaustive beam sweeping is $0.031$, and the misalignment rate of hierarchical beam sweeping is $0.245$. As for random beamforming, the performance improvement over training length becomes more significant. Specifically, the misalignment rate is $0.464$ when training length is $N=8$ and sharply decreases to $0.144$ when $N=16$, and finally it converges to  $0.023$ when $N = 256$. It is noteworthy that the performance enhancement brought by increasing training length is marginal from $N=32$. Therefore, the training length of random beamforming can be set adaptively according to SNR condition to achieve a satisfactory performance with moderate training cost.
}

\subsection{Performance of Joint Beam Training and Positioning for IRSs Assisted MmWave Communications }

\begin{figure}[tp]{
\begin{center}{\includegraphics[ height=4.2cm]{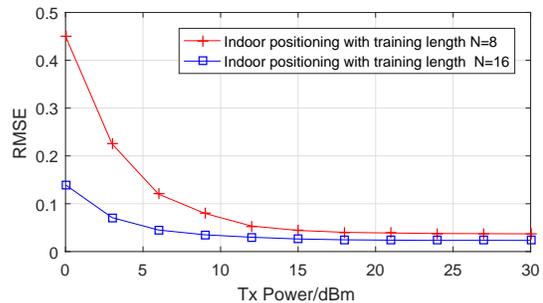}}
\caption{Accuracy of indoor positioning}\label{PositionMSE}
\end{center}}
\end{figure}

{ 
In this subsection, we study the performance of joint beam training and positioning for IRSs assisted mmWave communications. The configurations of IRSs, BS and MT, which determine the path gain, AoA and AoD of the LoS path, are given in Table. \ref{T1}. In addition, we set the number of users as $100$, which determines the blockage indicator, and we also set the number of NLoS as $4$.
}

In \figref{PositionMSE}, the accuracy of indoor positioning of IRSs assisted mmWave MIMO is studied in terms of root mean squared error (RMSE). When the training length is $N=16$ for each LoS/VLoS path, RMSE is $0.13$ meter at $0$ dBm, and converges to $0.02$ meter from $15$ dBm to $30$ dBm, which indicates that, with the aid of IRSs, mmWave MIMO achieves centimeter accuracy in indoor scenario. When the training length is $N=8$ for each LoS/VLoS path, RMSE is $0.45$ meter at $0$ dBm, and converges to $0.04$ meter from $15$ dBm to $30$ dBm. Considering the reduced training length, the accuracy limit of $0.04$ meter for $N=8$ case in high SNR regimes is acceptable. However,
the positioning accuracy of $N=8$ case is not satisfying in low SNR regimes. Through case analysis, we find that the correlation between residual ratio $\varpi_{\eta}$ and the  accuracy of $(\hat{\theta}_{\eta}, \hat{\phi}_{\eta})$ is weakened by the increased level of noise and the reduced training length. In other words, a small $\varpi_{\eta}$ may misleadingly correspond to an unreliable anchor node, and thus results in inaccurate estimate of position.  To improve the accuracy, a more sophisticated positioning algorithm that iteratively sorts the reliability will be developed in the future.

\begin{figure}[tp]{
\begin{center}{\includegraphics[ height=4.2cm]{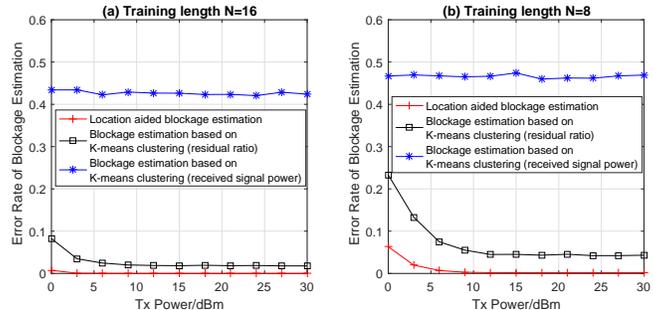}}
\caption{Error rate of blockage estimation}\label{BLK}
\end{center}}
\end{figure}

In \figref{BLK}, the error rate of blockage estimation is studied. For the purpose of comparison, two methods are adopted as benchmarks, which are (1) received power based blockage estimation and (2) residual ratio based blockage estimation. For (1), it is straightforward that unblocked links have significantly higher received signal level than that of blocked links. However, as power level is an absolute quantity, without the prior knowledge such as the likely range of received power, it is possible to mistake the unblocked link between MT and faraway anchor as a blocked link. In contrast, residual ratio in (2) is a relative quantity, which is not dependant on the likely range of received power. However, the optimal threshold that is essential for blockage estimation is unavailable either. Therefore, we adopt the K-means clustering method to  partition the $13$ observations into $2$ clusters, i.e., blocked links and unblocked links. When the training length is $N=16$, we can see from the figure that position aided blockage estimation is slightly erroneous merely at $0$ dBm and becomes errorless when transmit power increases. With respect to the benchmark methods, although the estimation accuracy of residual ratio based K-means clustering method is worse than position aided blockage estimation, its error rate is below $0.1$, which is acceptable. By contrast, the estimation error rate of received power based K-means clustering method is nearly $0.5$, which indicates that the estimation is almost random. When the training length reduces to $N=8$, the superiority of position aided blockage estimation is more remarkable, and this is owing to the cross-validation mechanism enabled by location information.

\begin{figure}[tp]{
\begin{center}{\includegraphics[ height=9.8cm]{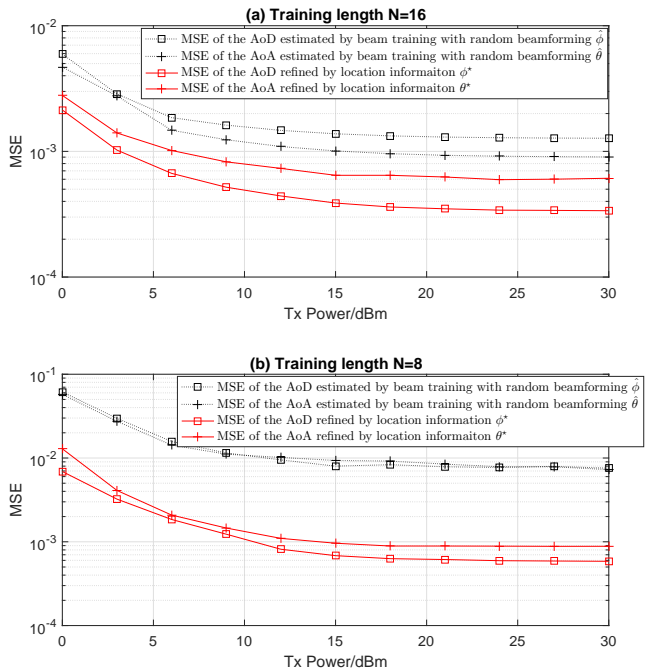}}
\caption{MSE performance of AoA/AoD refined by location information }\label{AoAAoDpos}
\end{center}}
\end{figure}

In \figref{AoAAoDpos}, MSE performance evaluation of AoA/AoD refined by location information is performed. To this end, we intentionally filter out the blocked links, and reserve AoA/AoD estimate of the unblocked links. As can be seen that AoA/AoD refined by location information is more accurate than AoA/AoD estimated by beam training with random beamforming. This is because location information is derived by multiple anchors, and AoA/AoD refinement according to geometric relationship means that the estimation is cross verified. It is noteworthy that the performance enhancement is more significant when the training length is $N=8$ for each LoS/VLoS path, from which we find the potential to reduce training length of beam training with the aid of location information. Another notable point is that AoA refined by location information is always worse than AoD refined by location information. This is because the direction vector $\hat{\mathbf{e}}_{MT}$ is derived from estimation in  \eqref{Direction}, while the direction vectors of anchors   $\mathbf{e}_{\eta}$ are well known.

\section{Conclusion}
In this paper, beam training for IRSs assisted mmWave communications is studied. By breaking down beam training for IRSs assisted mmWave MIMO into several mathematically equivalent sub-problems, we perform random beamforming and maximum likelihood estimation  to derive the optimal beam of BS/AP and MT  and the optimal reflection pattern of IRSs. Then, by sorting the reliability of the estimated AoA, AoD paris, we propose an iterative positioning algorithm to acquire the position of MT, and with which we are able to cross verify and enhance the estimation of AoA and AoD, and accurately predict link blockage. Numerical results show the superiority of our proposed beam training scheme and verify the performance gain brought by location information.
\begin{appendices}

\section{Partial derivatives of $g({\theta, \phi})$}
The derivative of $g({\theta, \phi})$ with respect to $\theta$ is
\begin{align} \label{Gradient1}
&\frac{\partial g({\theta, \phi}) }{\partial \theta}  \notag \\
= & \frac{\frac{\partial \mathbf{b}^H (\theta, \phi)\mathbf{D}^H\mathbf{y}\mathbf{y}^H\mathbf{D}\mathbf{b}(\theta, \phi)}{\partial \theta}}{ \mathbf{b}^H(\theta, \phi)\mathbf{D}^H \mathbf{D}\mathbf{b}(\theta, \phi) } - \notag \\
 &\frac{\mathbf{b}^H (\theta, \phi)\mathbf{D}^H\mathbf{y}\mathbf{y}^H\mathbf{D}\mathbf{b}(\theta, \phi)}{ \left( \mathbf{b}^H(\theta, \phi)\mathbf{D}^H \mathbf{D}\mathbf{b}(\theta, \phi) \right)^2} \frac{\partial \mathbf{b}^H (\theta, \phi)\mathbf{D}^H \mathbf{D}\mathbf{b}(\theta, \phi)}{\partial \theta} \notag \\
 = & 2 Re \left( \frac{ \mathbf{b}^H (\theta, \phi)\mathbf{D}^H\mathbf{y}\mathbf{y}^H\mathbf{D}\frac{\partial \mathbf{b}(\theta, \phi)}{\partial \theta}}{ \mathbf{b}^H(\theta, \phi)\mathbf{D}^H \mathbf{D}\mathbf{b}(\theta, \phi) } - \notag \right. \\
 &\left. \frac{\mathbf{b}^H (\theta, \phi)\mathbf{D}^H\mathbf{y}\mathbf{y}^H\mathbf{D}\mathbf{b}(\theta, \phi)}{ \left( \mathbf{b}^H(\theta, \phi)\mathbf{D}^H \mathbf{D}\mathbf{b}(\theta, \phi) \right)^2} { \mathbf{b}^H (\theta, \phi)\mathbf{D}^H \mathbf{D} \frac{\partial \mathbf{b}(\theta, \phi)}{\partial \theta}} \right)
\end{align}
where $\frac{\partial \mathbf{b} (\theta, \phi) }{\partial \theta}=  vec\left(  (\mathbf{a}_{Rx}(\theta) \odot  \boldsymbol{\vartheta}_{Rx} ) \mathbf{a}_{Tx}^H(\phi) \right) $ and  $\boldsymbol{\vartheta}_{Rx} = \left[0, j\pi, \cdots, j\pi(N_r -1)\right]^T $.
Similarly, the derivative of $g({\theta, \phi})$ with respect to $\phi$ is
\begin{align}  \label{Gradient2}
&\frac{\partial g({\theta, \phi}) }{\partial \phi}
 =  2 Re \left( \frac{ \mathbf{b}^H (\theta, \phi)\mathbf{D}^H\mathbf{y}\mathbf{y}^H\mathbf{D}\frac{\partial \mathbf{b}(\theta, \phi)}{\partial \phi}}{ \mathbf{b}^H(\theta, \phi)\mathbf{D}^H \mathbf{D}\mathbf{b}(\theta, \phi) } - \notag \right. \\
 &\left. \frac{\mathbf{b}^H (\theta, \phi)\mathbf{D}^H\mathbf{y}\mathbf{y}^H\mathbf{D}\mathbf{b}(\theta, \phi)}{ \left( \mathbf{b}^H(\theta, \phi)\mathbf{D}^H \mathbf{D}\mathbf{b}(\theta, \phi) \right)^2} { \mathbf{b}^H (\theta, \phi)\mathbf{D}^H \mathbf{D} \frac{\partial \mathbf{b}(\theta, \phi)}{\partial \phi}} \right)
\end{align}
where $\frac{\partial \mathbf{b} (\theta, \phi) }{\partial \phi} = vec\left(  \mathbf{a}_{Rx}(\theta) (\mathbf{a}_{Tx}(\phi) \odot  \boldsymbol{\vartheta}_{Tx})^H     \right)$ and  $\boldsymbol{\vartheta}_{Tx} = \left[0, j\pi, \cdots, j\pi(N_t -1)\right]^T$.

\section{Proof of Theorem 1}

In the noiseless scenario where $\mathbf{y} = \mathbf{D} \mathbf{b} (\theta, \phi)$,
according to Cauchy-Schwarz inequality, we have
\begin{align} \label{CS}
 &\left\| \frac{\mathbf{b}^H (\widetilde{\theta}, \widetilde{\phi})\mathbf{D}^H}{\|  \mathbf{D} \mathbf{b} (\widetilde{\theta}, \widetilde{\phi})\|_2} \mathbf{D} \mathbf{b} (\theta, \phi)\right\|_2 \leq  \| \mathbf{D} \mathbf{b} (\theta, \phi)\|_2
\end{align}
Then, the proof of Eq. \eqref{UniEst} is reduced to prove that
\begin{align}
 &  \left\| \frac{\mathbf{b}^H (\widetilde{\theta}, \widetilde{\phi})\mathbf{D}^H}{\|  \mathbf{D} \mathbf{b} (\widetilde{\theta}, \widetilde{\phi})\|_2} \mathbf{D} \mathbf{b} (\theta, \phi)\right\|_2 \neq   \| \mathbf{D} \mathbf{b} (\theta, \phi)\|_2  \label{ProEQ}
\end{align}
namely $\mathbf{D} \mathbf{b} (\theta, \phi) \neq \mu \mathbf{D} \mathbf{b} (\widetilde{\theta}, \widetilde{\phi}),  \;\;\forall \mu \in \mathbb{C}, \forall (\theta, \phi) \neq (\widetilde{\theta}, \widetilde{\phi})$, which is mathematically equivalent to  Eq. \eqref{UniRresentation}.

\section{Proof of Theorem 2}
The PEP is written as
\begin{align}
&\qquad Pe\left((\theta, \phi) \rightarrow (\widetilde{\theta}, \widetilde{\phi})\right) \notag \\
= &   Pr\left(\left\|\frac{\mathbf{b}^H (\theta, \phi)\mathbf{D}^H}{\|  \mathbf{D} \mathbf{b} (\theta, \phi)\|_2} \mathbf{y}\right\|_2^2 < \left\|\frac{\mathbf{b}^H (\widetilde{\theta}, \widetilde{\phi})\mathbf{D}^H}{\|  \mathbf{D} \mathbf{b} (\widetilde{\theta}, \widetilde{\phi})\|_2} \mathbf{y}\right\|_2^2   \right)    \notag \\
%= &Pr \Bigg(\left\| \frac{\mathbf{b}^H (\theta, \phi)\mathbf{D}^H}{\|   \mathbf{D} \mathbf{b} (\theta, \phi)\|_2}( \delta\mathbf{D} \mathbf{b}(\theta, \phi) + \mathbf{n}) \right\|_2^2  \notag \\
%& \qquad - \left\| \frac{\mathbf{b}^H (\widetilde{\theta}, \widetilde{\phi})\mathbf{D}^H}{\|  \mathbf{D} \mathbf{b} (\widetilde{\theta}, \widetilde{\phi})\|_2}( \delta \mathbf{D} \mathbf{b}(\theta, \phi) + \mathbf{n}) \right\|_2^2 < 0 \Bigg)\notag \\
= & Pr \Bigg(  {- \frac{|\mathbf{b}^H ( {\theta},  {\phi})\mathbf{D}^H\mathbf{n}|^2 }{\|  \mathbf{D} \mathbf{b} ( {\theta},  {\phi})\|_2}  + \frac{|\mathbf{b}^H (\widetilde{\theta}, \widetilde{\phi})\mathbf{D}^H\mathbf{n}|^2}{\|  \mathbf{D} \mathbf{b} (\widetilde{\theta}, \widetilde{\phi})\|_2}}  \notag \\
& \qquad     - 2 \Re\left\{  {\delta \mathbf{n}^H \mathbf{D}\mathbf{b}(\theta, \phi)}  \right\} \notag \\
 & \qquad + 2 \Re\left\{  \frac{\delta \mathbf{n}^H \mathbf{D}\mathbf{b}(\widetilde{\theta}, \widetilde{\phi}) \mathbf{b}^H(\widetilde{\theta}, \widetilde{\phi}) \mathbf{D}^H \mathbf{D}\mathbf{b}({\theta}, {\phi})}{\|  \mathbf{D} \mathbf{b} (\widetilde{\theta}, \widetilde{\phi})\|_2^2} \right\}  \notag \\
& \qquad > \|\delta \mathbf{D} \mathbf{b} (\theta, \phi)\|_2^2 -   \frac{|\delta \mathbf{b}^H (\widetilde{\theta}, \widetilde{\phi})\mathbf{D}^H\mathbf{D}\mathbf{b}(\theta,\phi)|^2}{\|  \mathbf{D} \mathbf{b} (\widetilde{\theta}, \widetilde{\phi})\|_2^2}     \Bigg) \notag \\
\approx & Pr \Bigg(   N_1  > \|\delta \mathbf{D} \mathbf{b} (\theta, \phi)\|_2^2 -   \frac{|\delta \mathbf{b}^H (\widetilde{\theta}, \widetilde{\phi})\mathbf{D}^H\mathbf{D}\mathbf{b}(\theta,\phi)|^2}{\|  \mathbf{D} \mathbf{b} (\widetilde{\theta}, \widetilde{\phi})\|_2^2}     \Bigg) \label{EqNoise}
\end{align}
where
\begin{align}
&\qquad \qquad N_1 = \notag \\
&2\Re\left\{ -  {\delta \mathbf{n}^H \mathbf{D}\mathbf{b}(\theta, \phi)} +  \frac{\delta \mathbf{n}^H \mathbf{D}\mathbf{b}(\widetilde{\theta}, \widetilde{\phi}) \mathbf{b}^H(\widetilde{\theta}, \widetilde{\phi}) \mathbf{D}^H \mathbf{D}\mathbf{b}({\theta}, {\phi})}{\|  \mathbf{D} \mathbf{b} (\widetilde{\theta}, \widetilde{\phi})\|_2^2}  \right\} \notag
\end{align}
and
$\Re\{\cdot\}$ is the real part of a complex number. Eq. \eqref{EqNoise} is
obtained by neglecting the component $- \frac{|\mathbf{b}^H ( {\theta},  {\phi})\mathbf{D}^H\mathbf{n}|^2 }{\|  \mathbf{D} \mathbf{b} ( {\theta},  {\phi})\|_2}  + \frac{|\mathbf{b}^H (\widetilde{\theta}, \widetilde{\phi})\mathbf{D}^H\mathbf{n}|^2}{\|  \mathbf{D} \mathbf{b} (\widetilde{\theta}, \widetilde{\phi})\|_2}$ in high SNR regime. Since $N_1$ is a Gaussian random variable, we
have
\begin{align}
& \qquad \qquad N_1 \sim \notag\\
&\mathcal{N} \left( 0, 2\sigma^2 |\delta|^2 \left( \|  \mathbf{D} \mathbf{b} (\theta, \phi)\|_2^2  -  \frac{|\mathbf{b}^H (\widetilde{\theta}, \widetilde{\phi})\mathbf{D}^H  \mathbf{D} \mathbf{b}(\theta, \phi)|^2 }{\|  \mathbf{D} \mathbf{b} (\widetilde{\theta}, \widetilde{\phi}) \|_2^2}  \right)    \right) \notag
\end{align}
According to the definition of Q function, \eqref{Th1} is obtained.

\section{Proof of Proposition 1}
Firstly, we write the expression of $d^2(\mathbf{D}_n, \theta, \phi, \widetilde{\theta}, \widetilde{\phi})$ as
\begin{align}
 & \qquad d^2(\mathbf{D}_n, \theta, \phi, \widetilde{\theta}, \widetilde{\phi})  \notag \\
 =& \left\|  \mathbf{D}_n \mathbf{b} (\theta, \phi)\right\|_2^2  -   \frac{|\mathbf{b}^H (\widetilde{\theta}, \widetilde{\phi})\mathbf{D}_n^H\mathbf{D}_n \mathbf{b}(\theta, \phi)|^2}{\|  \mathbf{D}_n \mathbf{b} (\widetilde{\theta}, \widetilde{\phi})\|_2^2}  \notag \\
 =&   \mathbf{b}^H(\theta,\phi)\mathbf{D}_{n-1}^H  \mathbf{D}_{n-1} \mathbf{b}(\theta,\phi) + \mathbf{b}^H(\theta,\phi)\mathbf{d}_n\mathbf{d}_n^H  \mathbf{b}(\theta,\phi)  -   \notag \\
& \frac{\left|\mathbf{b}^H(\widetilde{\theta},\widetilde{\phi})\mathbf{D}_{n-1}^H  \mathbf{D}_{n-1} \mathbf{b}(\theta,\phi) + \mathbf{b}^H(\widetilde{\theta},\widetilde{\phi})\mathbf{d}_n\mathbf{d}_n^H  \mathbf{b}(\theta,\phi)\right|^2}{\mathbf{b}^H(\widetilde{\theta},\widetilde{\phi})\mathbf{D}_{n-1}^H  \mathbf{D}_{n-1} \mathbf{b}(\widetilde{\theta},\widetilde{\phi}) + \mathbf{b}^H(\widetilde{\theta},\widetilde{\phi})\mathbf{d}_n\mathbf{d}_n^H  \mathbf{b}(\widetilde{\theta},\widetilde{\phi})} \notag
 \end{align}
Thus
\begin{align}
&d^2(\mathbf{D}_n, \theta, \phi, \widetilde{\theta}, \widetilde{\phi})  -  d^2(\mathbf{D}_{n-1}, \theta, \phi, \widetilde{\theta}, \widetilde{\phi}) \notag \\
 =& \mathbf{b}^H(\theta,\phi)\mathbf{d}_n\mathbf{d}_n^H  \mathbf{b}(\theta,\phi)  + \frac{ |\mathbf{b}^H(\widetilde{\theta},\widetilde{\phi})\mathbf{D}_{n-1}^H  \mathbf{D}_{n-1} \mathbf{b}(\theta,\phi) |^2}{\mathbf{b}^H(\widetilde{\theta},\widetilde{\phi})\mathbf{D}_{n-1}^H  \mathbf{D}_{n-1} \mathbf{b}(\widetilde{\theta},\widetilde{\phi})  } \notag \\ &-\frac{|\mathbf{b}^H(\widetilde{\theta},\widetilde{\phi})\mathbf{D}_{n-1}^H  \mathbf{D}_{n-1} \mathbf{b}(\theta,\phi) + \mathbf{b}^H(\widetilde{\theta},\widetilde{\phi})\mathbf{d}_n\mathbf{d}_n^H  \mathbf{b}(\theta,\phi)|^2}{\mathbf{b}^H(\widetilde{\theta},\widetilde{\phi})\mathbf{D}_{n-1}^H  \mathbf{D}_{n-1} \mathbf{b}(\widetilde{\theta},\widetilde{\phi}) + \mathbf{b}^H(\widetilde{\theta},\widetilde{\phi})\mathbf{d}_n\mathbf{d}_n^H  \mathbf{b}(\widetilde{\theta},\widetilde{\phi})} \notag
%  =& \frac{\|\mathbf{b}^H(\widetilde{\theta},\widetilde{\phi})\mathbf{d}_n\mathbf{d}_n^H  \mathbf{b}( {\theta}, {\phi})\|_2^2}{\mathbf{b}^H(\widetilde{\theta},\widetilde{\phi})\mathbf{d}_n\mathbf{d}_n^H  \mathbf{b}(\widetilde{\theta},\widetilde{\phi})}  + \frac{\|\mathbf{b}^H(\widetilde{\theta},\widetilde{\phi})\mathbf{D}_{n-1}^H  \mathbf{D}_{n-1} \mathbf{b}(\theta,\phi) \|_2^2}{\mathbf{b}^H(\widetilde{\theta},\widetilde{\phi})\mathbf{D}_{n-1}^H  \mathbf{D}_{n-1} \mathbf{b}(\widetilde{\theta},\widetilde{\phi})  } \notag \\ &-\frac{\|\mathbf{b}^H(\widetilde{\theta},\widetilde{\phi})\mathbf{D}_{n-1}^H  \mathbf{D}_{n-1} \mathbf{b}(\theta,\phi) + \mathbf{b}^H(\widetilde{\theta},\widetilde{\phi})\mathbf{d}_n\mathbf{d}_n^H  \mathbf{b}(\theta,\phi)\|_2^2}{\mathbf{b}^H(\widetilde{\theta},\widetilde{\phi})\mathbf{D}_{n-1}^H  \mathbf{D}_{n-1} \mathbf{b}(\widetilde{\theta},\widetilde{\phi}) + \mathbf{b}^H(\widetilde{\theta},\widetilde{\phi})\mathbf{d}_n\mathbf{d}_n^H  \mathbf{b}(\widetilde{\theta},\widetilde{\phi})}
\end{align}
For the purpose of conciseness, let
\begin{align}
 \check{a} &= \mathbf{b}^H(\widetilde{\theta},\widetilde{\phi})\mathbf{d}_n\mathbf{d}_n^H  \mathbf{b}(\widetilde{\theta},\widetilde{\phi});\notag \\
 \check{b} &=  \mathbf{b}^H(\widetilde{\theta},\widetilde{\phi})\mathbf{d}_n\mathbf{d}_n^H  \mathbf{b}( {\theta}, {\phi});\notag \\
 \check{c} &= \mathbf{b}^H(\widetilde{\theta},\widetilde{\phi})\mathbf{D}_{n-1}^H  \mathbf{D}_{n-1} \mathbf{b}(\widetilde{\theta},\widetilde{\phi});\notag \\
 \check{d} &=  \mathbf{b}^H(\widetilde{\theta},\widetilde{\phi})\mathbf{D}_{n-1}^H  \mathbf{D}_{n-1} \mathbf{b}(\theta,\phi). \notag
\end{align}
As $\mathbf{b}^H(\widetilde{\theta},\widetilde{\phi})\mathbf{d}_n$ and $\mathbf{d}_n^H  \mathbf{b}( {\theta}, {\phi})$ are numbers, rather than vectors, we have
\begin{align}
\mathbf{b}^H(\theta,\phi)\mathbf{d}_n\mathbf{d}_n^H  \mathbf{b}(\theta,\phi) &= \frac{|\mathbf{b}^H(\widetilde{\theta},\widetilde{\phi})\mathbf{d}_n\mathbf{d}_n^H  \mathbf{b}( {\theta}, {\phi})|^2}{\mathbf{b}^H(\widetilde{\theta},\widetilde{\phi})\mathbf{d}_n\mathbf{d}_n^H  \mathbf{b}(\widetilde{\theta},\widetilde{\phi})} =\frac{ |\check{b}|^2}{ \check{a}} \notag
\end{align}
Then,
\begin{align}
&d^2(\mathbf{D}_{n}, \theta, \phi, \widetilde{\theta}, \widetilde{\phi})  - d^2(\mathbf{D}_{n-1}, \theta, \phi, \widetilde{\theta}, \widetilde{\phi}) \notag \\
 =&\frac{ |\check{b}|^2}{ \check{a}} + \frac{|\check{d}|^2}{\check{c}} - \frac{|\check{b} +\check{d}|^2 }{ \check{a}+  \check{c}} \notag \\
 =& \frac{  |\check{b}|^2 \check{c}( \check{a} + \check{c})  +  |\check{d}|^2 \check{a}( \check{a} +  \check{c}) - \check{a} \check{c}|\check{b} +\check{d}|^2}{ \check{a} \check{c}( \check{a} +  \check{c})} \notag \\
  =& \frac{  |\check{b}|^2 \check{c}( \check{a} +  \check{c})  +  |\check{d}|^2 \check{a}( \check{a} +  \check{c}) - \check{a} \check{c}|\check{b}|^2 -   \check{a} \check{c} |\check{d}|^2  - 2 \check{a} \check{c} Re\{\check{b}^* \check{d}\}   }{ \check{a} \check{c}( \check{a} +  \check{c})} \notag \\
    =& \frac{  |\check{b}|^2 \check{c}^2  +  |\check{d}|^2 \check{a}^2     - 2 \check{a} \check{c} Re\{\check{b}^* \check{d}\}   }{ \check{a} \check{c}( \check{a} +  \check{c})} \notag \\
    =& \frac{ |\check{a}\check{d} - \check{b} \check{c}|^2  }{ \check{a} \check{c}( \check{a} +  \check{c})}  \geq 0 \notag
\end{align}
and equality holds when $\check{a}\check{d} - \check{b} \check{c} = 0$, namely,
\begin{align}
\frac{\mathbf{b}^H(\widetilde{\theta},\widetilde{\phi})\mathbf{d}_n\mathbf{d}_n^H  \mathbf{b}(\widetilde{\theta},\widetilde{\phi})}{ \mathbf{b}^H(\widetilde{\theta},\widetilde{\phi})\mathbf{d}_n\mathbf{d}_n^H  \mathbf{b}( {\theta}, {\phi}) }  = & \frac{\mathbf{b}^H(\widetilde{\theta},\widetilde{\phi})\mathbf{D}_{n-1}^H  \mathbf{D}_{n-1} \mathbf{b}(\widetilde{\theta},\widetilde{\phi})}{\mathbf{b}^H(\widetilde{\theta},\widetilde{\phi})\mathbf{D}_{n-1}^H  \mathbf{D}_{n-1} \mathbf{b}(\theta,\phi)  }
\end{align}

\end{appendices}

\bibliographystyle{IEEEtran}%

\bibliography{bibfile}

% Generated by IEEEtran.bst, version: 1.13 (2008/09/30)
\begin{thebibliography}{10}
\providecommand{\url}[1]{#1}
\csname url@samestyle\endcsname
\providecommand{\newblock}{\relax}
\providecommand{\bibinfo}[2]{#2}
\providecommand{\BIBentrySTDinterwordspacing}{\spaceskip=0pt\relax}
\providecommand{\BIBentryALTinterwordstretchfactor}{4}
\providecommand{\BIBentryALTinterwordspacing}{\spaceskip=\fontdimen2\font plus
\BIBentryALTinterwordstretchfactor\fontdimen3\font minus
  \fontdimen4\font\relax}
\providecommand{\BIBforeignlanguage}[2]{{%
\expandafter\ifx\csname l@#1\endcsname\relax
\typeout{** WARNING: IEEEtran.bst: No hyphenation pattern has been}%
\typeout{** loaded for the language `#1'. Using the pattern for}%
\typeout{** the default language instead.}%
\else
\language=\csname l@#1\endcsname
\fi
#2}}
\providecommand{\BIBdecl}{\relax}
\BIBdecl

\bibitem{Mmwave0}
Z.~Pi and F.~Khan, ``An introduction to millimeter-wave mobile broadband
  systems,'' \emph{IEEE Commun. Mag.}, vol.~49, no.~6, Jun. 2011.

\bibitem{Mmwave2}
S.~Rangan, T.~S. Rappaport, and E.~Erkip, ``Millimeter-wave cellular wireless
  networks: Potentials and challenges,'' \emph{Proc. IEEE}, vol. 102, no.~3,
  pp. 366--385, Mar. 2014.

\bibitem{nitsche2014ieee}
T.~Nitsche, C.~Cordeiro, A.~B. Flores, E.~W. Knightly, E.~Perahia, and J.~C.
  Widmer, ``{IEEE} 802.11 ad: directional 60 {GHz} communication for
  {multi-Gigabit-per-second Wi-Fi},'' \emph{IEEE Commun. Mag.}, vol.~52,
  no.~12, pp. 132--141, 2014.

\bibitem{sulyman2014radio}
A.~I. Sulyman, A.~T. Nassar, M.~K. Samimi, G.~R. MacCartney, T.~S. Rappaport,
  and A.~Alsanie, ``Radio propagation path loss models for {5G} cellular
  networks in the 28 {GHz} and 38 {GHz} millimeter-wave bands,'' \emph{IEEE
  Commun. Mag.}, vol.~52, no.~9, pp. 78--86, 2014.

\bibitem{3GPP1}
RP-172115, ``{Revised {WID} on New Radio Access Technology},'' 2017.

\bibitem{AIP1}
A.~Natarajan, S.~K. Reynolds, M.~Tsai, S.~T. Nicolson, J.~C. Zhan, D.~G. Kam,
  D.~Liu, Y.~O. Huang, A.~Valdes-Garcia, and B.~A. Floyd, ``A fully-integrated
  16-element phased-array receiver in {SiGe BiCMOS} for {60-GHz}
  communications,'' \emph{IEEE J. Solid-State Circuits}, vol.~46, no.~5, pp.
  1059--1075, May 2011.

\bibitem{raghavan2016directional}
V.~Raghavan, S.~Subramanian, J.~Cezanne, A.~Sampath, O.~Koymen, and J.~Li,
  ``Directional hybrid precoding in millimeter-wave {MIMO} systems,'' in
  \emph{2016 IEEE Global Communications Conference (GLOBECOM)}, 2016, pp. 1--7.

\bibitem{bai2014coverage}
T.~Bai and R.~W. Heath, ``Coverage and rate analysis for millimeter-wave
  cellular networks,'' \emph{IEEE Trans. Wireless Commun.}, vol.~14, no.~2, pp.
  1100--1114, 2014.

\bibitem{muhi2010modelling}
Z.~Muhi-Eldeen, L.~Ivrissimtzis, and M.~Al-Nuaimi, ``Modelling and measurements
  of millimetre wavelength propagation in urban environments,'' \emph{IET
  Microw., Antennas Propag.}, vol.~4, no.~9, pp. 1300--1309, 2010.

\bibitem{wang2019intelligent}
P.~Wang, J.~Fang, X.~Yuan, Z.~Chen, H.~Duan, and H.~Li, ``Intelligent
  reflecting surface-assisted millimeter wave communications: Joint active and
  passive precoding design,'' \emph{arXiv preprint arXiv:1908.10734}, 2019.

\bibitem{yang2019intelligent}
Y.~Yang, B.~Zheng, S.~Zhang, and R.~Zhang, ``Intelligent reflecting surface
  meets {OFDM}: Protocol design and rate maximization,'' \emph{arXiv preprint
  arXiv:1906.09956}, 2019.

\bibitem{IRSQU}
Q.~{Wu} and R.~{Zhang}, ``Intelligent reflecting surface enhanced wireless
  network via joint active and passive beamforming,'' \emph{IEEE Trans.
  Wireless Commun.}, vol.~18, no.~11, pp. 5394--5409, Nov 2019.

\bibitem{wang2019channel}
Z.~Wang, L.~Liu, and S.~Cui, ``Channel estimation for intelligent reflecting
  surface assisted multiuser communications,'' \emph{arXiv preprint
  arXiv:1911.03084}, 2019.

\bibitem{basar2019wireless}
E.~Basar, M.~Di~Renzo, J.~de~Rosny, M.~Debbah, M.-S. Alouini, and R.~Zhang,
  ``Wireless communications through reconfigurable intelligent surfaces,''
  \emph{arXiv preprint arXiv:1906.09490}, 2019.

\bibitem{huang2019reconfigurable}
C.~Huang, A.~Zappone, G.~C. Alexandropoulos, M.~Debbah, and C.~Yuen,
  ``Reconfigurable intelligent surfaces for energy efficiency in wireless
  communication,'' \emph{IEEE Trans. Wireless Commun.}, vol.~18, no.~8, pp.
  4157--4170, 2019.

\bibitem{taha2019enabling}
A.~Taha, M.~Alrabeiah, and A.~Alkhateeb, ``Enabling large intelligent surfaces
  with compressive sensing and deep learning,'' \emph{arXiv preprint
  arXiv:1904.10136}, 2019.

\bibitem{huang2018achievable}
C.~Huang, A.~Zappone, M.~Debbah, and C.~Yuen, ``Achievable rate maximization by
  passive intelligent mirrors,'' in \emph{IEEE International Conference on
  Acoustics, Speech and Signal Processing (ICASSP)}, 2018, pp. 3714--3718.

\bibitem{khawaja2019coverage}
W.~Khawaja, O.~Ozdemir, Y.~Yapici, F.~Erden, M.~Ezuma, and I.~Guvenc,
  ``Coverage enhancement for {NLoS} mmwave links using passive reflectors,''
  \emph{arXiv preprint arXiv:1905.04794}, 2019.

\bibitem{tan2018enabling}
X.~Tan, Z.~Sun, D.~Koutsonikolas, and J.~M. Jornet, ``Enabling indoor mobile
  millimeter-wave networks based on smart reflect-arrays,'' in \emph{IEEE
  INFOCOM}, 2018, pp. 270--278.

\bibitem{tang2019wireless}
W.~Tang, X.~Li, J.~Y. Dai, S.~Jin, Y.~Zeng, Q.~Cheng, and T.~J. Cui, ``Wireless
  communications with programmable metasurface: Transceiver design and
  experimental results,'' \emph{China Commun.}, vol.~16, no.~5, pp. 46--61,
  2019.

\bibitem{zhao2018programmable}
J.~Zhao, X.~Yang, J.~Y. Dai, Q.~Cheng, X.~Li, N.~H. Qi, J.~C. Ke, G.~D. Bai,
  S.~Liu, S.~Jin \emph{et~al.}, ``Programmable time-domain digital-coding
  metasurface for non-linear harmonic manipulation and new wireless
  communication systems,'' \emph{Natl. Sci. Rev.}, vol.~6, no.~2, pp. 231--238,
  2018.

\bibitem{tang2019programmable}
W.~Tang, J.~Y. Dai, M.~Chen, X.~Li, Q.~Cheng, S.~Jin, K.-K. Wong, and T.~J.
  Cui, ``Programmable metasurface-based rf chain-free {8PSK} wireless
  transmitter,'' \emph{Electron. Lett.}, vol.~55, no.~7, pp. 417--420, 2019.

\bibitem{holloway2012overview}
C.~L. Holloway, E.~F. Kuester, J.~A. Gordon, J.~O'Hara, J.~Booth, and D.~R.
  Smith, ``An overview of the theory and applications of metasurfaces: The
  two-dimensional equivalents of metamaterials,'' \emph{IEEE Antennas Propag.
  Mag.}, vol.~54, no.~2, pp. 10--35, 2012.

\bibitem{yang2016programmable}
H.~Yang, X.~Cao, F.~Yang, J.~Gao, S.~Xu, M.~Li, X.~Chen, Y.~Zhao, Y.~Zheng, and
  S.~Li, ``A programmable metasurface with dynamic polarization, scattering and
  focusing control,'' \emph{Sci. Rep.}, vol.~6, p. 35692, 2016.

\bibitem{wang2020compressed}
P.~Wang, J.~Fang, H.~Duan, and H.~Li, ``Compressed channel estimation for
  intelligent reflecting surface-assisted millimeter wave systems,'' \emph{IEEE
  Signal Processing Letters}, 2020.

\bibitem{HighResolution}
C.~{Jia}, J.~{Cheng}, H.~{Gao}, and W.~{Xu}, ``High-resolution channel
  estimation for intelligent reflecting surface-assisted mmwave
  communications,'' in \emph{2020 IEEE 31st Annual International Symposium on
  Personal, Indoor and Mobile Radio Communications}, 2020, pp. 1--6.

\bibitem{xiao2016hierarchical}
Z.~Xiao, T.~He, P.~Xia, and X.-G. Xia, ``Hierarchical codebook design for
  beamforming training in millimeter-wave communication,'' \emph{IEEE Trans.
  Wireless Commun.}, vol.~15, no.~5, pp. 3380--3392, 2016.

\bibitem{Mmwave5}
A.~Alkhateeb, O.~El~Ayach, G.~Leus, and R.~W. Heath, ``Channel estimation and
  hybrid precoding for millimeter wave cellular systems,'' \emph{IEEE J. Sel.
  Topics Signal Process.}, vol.~8, no.~5, pp. 831--846, 2014.

\bibitem{OTHOR}
W.~{Wang} and W.~{Zhang}, ``Orthogonal projection-based channel estimation for
  multi-panel millimeter wave {MIMO},'' \emph{IEEE Trans. Commun.}, vol.~68,
  no.~4, pp. 2173--2187, 2020.

\bibitem{Mmwave6}
A.~Alkhateeb, G.~Leus, and R.~W. Heath, ``Limited feedback hybrid precoding for
  multi-user millimeter wave systems,'' \emph{IEEE Trans. Wireless Commun.},
  vol.~14, no.~11, pp. 6481--6494, Nov. 2015.

\bibitem{myers2020deep}
N.~J. Myers, Y.~Wang, N.~Gonz{\'a}lez-Prelcic, and R.~W. Heath, ``Deep
  learning-based beam alignment in mmwave vehicular networks,'' in \emph{IEEE
  International Conference on Acoustics, Speech and Signal Processing
  (ICASSP)}, 2020, pp. 8569--8573.

\bibitem{baraniuk2009random}
R.~G. Baraniuk and M.~B. Wakin, ``Random projections of smooth manifolds,''
  \emph{Found. Comput. Math.}, vol.~9, no.~1, pp. 51--77, 2009.

\bibitem{clarkson2008tighter}
K.~L. Clarkson, ``Tighter bounds for random projections of manifolds,'' in
  \emph{Proceedings of the twenty-fourth annual symposium on Computational
  geometry}, 2008, pp. 39--48.

\bibitem{Manifold1}
G.~{Efstathopoulos} and A.~{Manikas}, ``Extended array manifolds: Functions of
  array manifolds,'' \emph{IEEE Trans. Signal Process.}, vol.~59, no.~7, pp.
  3272--3287, July 2011.

\bibitem{CompressedSensingBook}
S.~Foucart and H.~Rauhut, \emph{A Mathematical Introduction to Compressive
  Sensing}.\hskip 1em plus 0.5em minus 0.4em\relax Birkh{\"a}user Basel, 2013,
  vol.~1, no.~3.

\bibitem{candes2006stable}
E.~J. Candes, J.~K. Romberg, and T.~Tao, ``Stable signal recovery from
  incomplete and inaccurate measurements,'' \emph{Comm. Pure Appl. Math.},
  vol.~59, no.~8, pp. 1207--1223, 2006.

\bibitem{chiani2003new}
M.~Chiani, D.~Dardari, and M.~K. Simon, ``New exponential bounds and
  approximations for the computation of error probability in fading channels,''
  \emph{IEEE Transactions on Wireless Communications}, vol.~2, no.~4, pp.
  840--845, 2003.

\bibitem{foy1976position}
W.~H. Foy, ``Position-location solutions by {Taylor-series} estimation,''
  \emph{IEEE Trans. Aerosp. Electron. Syst.}, no.~2, pp. 187--194, 1976.

\bibitem{ProjGradient}
J.~Liu and J.~Ye, ``Efficient {Euclidean} projections in linear time,'' in
  \emph{Proceedings of the 26th Annual International Conference on Machine
  Learning}, 2009, pp. 657--664.

\bibitem{giordani2018tutorial}
M.~Giordani, M.~Polese, A.~Roy, D.~Castor, and M.~Zorzi, ``A tutorial on beam
  management for 3gpp nr at mmwave frequencies,'' \emph{IEEE Communications
  Surveys \& Tutorials}, vol.~21, no.~1, pp. 173--196, 2018.

\end{thebibliography}

\end{document}